\definecolor{colorone}{rgb}{1,0.36,0.03}
\definecolor{colortwo}{rgb}{0.54,0.71,0.03}
\definecolor{colorthree}{rgb}{0.01,0.51,0.93}
\definecolor{colorfour}{rgb}{0.47,0.26,0.58}
\definecolor{lightblue}{rgb}{0.01,0.51,0.93}
\def\grd@save@target#1{%
  \def\grd@target{#1}}
\def\grd@save@start#1{%
  \def\grd@start{#1}}
\tikzset{
  grid with coordinates/.style={
    to path={%
      \pgfextra{%
        \edef\grd@@target{(\tikztotarget)}%
        \tikz@scan@one@point\grd@save@target\grd@@target\relax
        \edef\grd@@start{(\tikztostart)}%
        \tikz@scan@one@point\grd@save@start\grd@@start\relax
        \draw[minor help lines,magenta] (\tikztostart) grid (\tikztotarget);
        \draw[major help lines] (\tikztostart) grid (\tikztotarget);
        \grd@start
        \pgfmathsetmacro{\grd@xa}{\the\pgf@x/1cm}
        \pgfmathsetmacro{\grd@ya}{\the\pgf@y/1cm}
        \grd@target
        \pgfmathsetmacro{\grd@xb}{\the\pgf@x/1cm}
        \pgfmathsetmacro{\grd@yb}{\the\pgf@y/1cm}
        \pgfmathsetmacro{\grd@xc}{\grd@xa + \pgfkeysvalueof{/tikz/grid with coordinates/major step}}
        \pgfmathsetmacro{\grd@yc}{\grd@ya + \pgfkeysvalueof{/tikz/grid with coordinates/major step}}
        \foreach \x in {\grd@xa,\grd@xc,...,\grd@xb}
        \node[anchor=north] at (\x,\grd@ya) {\pgfmathprintnumber{\x}};
        \foreach \y in {\grd@ya,\grd@yc,...,\grd@yb}
        \node[anchor=east] at (\grd@xa,\y) {\pgfmathprintnumber{\y}};
      }
    }
  },
  minor help lines/.style={
    help lines,
    step=\pgfkeysvalueof{/tikz/grid with coordinates/minor step}
  },
  major help lines/.style={
    help lines,
    line width=\pgfkeysvalueof{/tikz/grid with coordinates/major line width},
    step=\pgfkeysvalueof{/tikz/grid with coordinates/major step}
  },
  grid with coordinates/.cd,
  minor step/.initial=.2,
  major step/.initial=1,
  major line width/.initial=2pt,
}
\nc{\lfc}{\Gamma}
\nc{\wst}{\text{\rm W}}
\nc{\cho}{\text{\rm C}}
\nc{\ave}{\text{\rm A}}
\nc{\CCZ}{\text{\rm CCZ}}
\nc{\conv}{\text{\rm conv}}
\newcommand{\diamnorm}[1]{\left\| #1 \right\|_\diamond}
\newcommand{\zw}[1]{ { \color{red}  (ZW: #1) }}
\newcommand{\kf}[1]{ { \color{blue}  (KF: #1) }}
\newcommand{\new}[1]{{\color{black}#1}}
\begin{document}

\title{No-go theorems for quantum resource purification II: \\ new approach and channel theory}

\author{Kun Fang}
\email{fangkun02@baidu.com}
\affiliation{Institute for Quantum Computing, Baidu Research, Beijing 100193, China}
\affiliation{Institute for Quantum Computing, University of Waterloo, Waterloo, Ontario N2L 3G1, Canada}
\author{Zi-Wen Liu}
\email{zliu1@perimeterinstitute.ca}
\affiliation{Perimeter Institute for Theoretical Physics, Waterloo, Ontario N2L 2Y5, Canada}
\date{\today}

\begin{abstract}
It has been recently shown that there exist universal fundamental limits to the accuracy and efficiency of the transformation from noisy resource states to pure ones (e.g.,~distillation) in any well-behaved quantum resource theory [Fang/Liu, \href{https://journals.aps.org/prl/abstract/10.1103/PhysRevLett.125.060405}{Phys.\ Rev.\ Lett.\ {\bf 125}, 060405 (2020)}]. 
Here, we develop a novel and powerful method for analyzing the limitations on quantum resource purification, which not only leads to improved bounds that rule out exact purification for a broader range of noisy states and are tight in certain cases, but also enable us to establish a robust no-purification theory for quantum channel (dynamical) resources.  More specifically, we employ the new method to derive universal bounds on the error and cost of transforming generic noisy channels (where multiple instances can be used adaptively, in contrast to the state theory) to some unitary resource channel under any free channel-to-channel map. 
We address several cases of practical interest in more concrete terms, and discuss the connections and applications of our general results to distillation, quantum error correction, quantum Shannon theory, and quantum circuit synthesis.
\end{abstract}

\maketitle

\section{Introduction}

Quantum technologies, such as quantum computing, quantum communication, and quantum cryptography, are an exciting frontier of science, due to their promising potential of achieving substantial advantages over conventional methods that may spark an important technological revolution. 
However, quantum systems are inherently highly susceptible to noise and errors in real-world scenarios, which often make them unreliable or difficult to scale up.
This poses a serious challenge to realizing the potential power of quantum technologies in practice.
The noise problem is particularly pressing at the moment, as we are now at a critical juncture where we are starting to make real effort to put the theoretically blueprinted quantum technologies into practice \cite{Preskill2018quantumcomputingin,supremacy19}.
In order to ease the effects of noise, we would generally need techniques that can ``purify'' the noisy systems.
To this end, methods such as quantum error correction \cite{nielsen2010quantum} and distillation \cite{BBPS96:ent_dist,BBPSSW96:ent_dist,BDSW96:ent_qec,Bravyi2005} are developed and have become central research topics in quantum information.


Behind the power of quantum technologies is the manipulation and utilization of various forms of quantum ``resources'' such as  entanglement \cite{Horodecki:entanglement}, coherence \cite{coherenceRMP}, and ``magic'' \cite{Bravyi2005,Veitch_2014}.
These different kinds a quantum resources can be commonly understood and characterized using the universal framework of ``quantum resource theory'' (see, e.g.,~Ref.~\cite{ChitambarGour19} for an introduction), which have been under active developments in recent years.
Recently, Ref.~\cite{FangLiu20} revealed a fundamental principle of quantum mechanics that there exists universal limitations on the accuracy and efficiency of purifying noisy states in general quantum resource theories, by employing one-shot resource theory ideas \cite{LBT19}.
However, Ref.~\cite{FangLiu20} is only part of the story and there are two gaps that we would like to fill to make the picture more complete.  First, the results there assume the input states to be full-rank and it is not fully understood whether there are no-purification rules when the input state is noisy but not of full rank.  Second, the approach developed there is primarily designed for state or static resources, but given that the manipulation of channel or dynamical resources plays intrinsic roles in many scenarios including quantum computation, communication, and error correction, it is also important to understand whether the no-purification principles extend to quantum channels.

In this work, we develop a novel approach to establishing fundamental limits of general quantum resource purification tasks, which addresses the above problems.  This approach is built upon decompositions of the input that separate out the free parts.  As we demonstrate, such  decompositions link the weight of the free parts, a key quantity that we call \emph{free component}, to the optimal error of purification. We apply this approach to both quantum states and channel resource theories.    For state theories, we use the new method to derive new bounds on the error and efficiency of deterministic purification or distillation tasks, which significantly improve those in Ref.~\cite{FangLiu20}.   More specifically, the new results lift the full-rank assumption and imply no-purification principles for a broader range of mixed states. Furthermore, they are quantitatively better and are shown to be tight in certain simple cases. 
We use several concrete examples to demonstrate the improvements and show that the new bounds are tight in certain cases.
Next, as a major contribution of this work, we develop a comprehensive no-purification theory for quantum channels (Ref.~\cite{FangLiu20} presents only a zero-error result).  Most importantly, there are two key complications of the channel theory that does not come up in the state theory: (i) 
There are several different ways to define channel fidelity measures; (ii) Multiple instances of channels can be used or consumed in various presumably inequivalent  ways, such as in parallel, sequentially, or adaptively.   Using the free component method, we derive bounds on the purification errors and costs for all cases.    
To provide a more concrete understanding, we shall discuss the roles and features of common noise channels in different types of channel resource theories, as well as providing guidelines for applying the no-purification bounds to a broad range of fields of great theoretical and practical interest, including distillation, quantum error correction, Shannon theory, and circuit (gate) synthesis.  

\new{We emphasize a particularly remarkable and counterintuitive feature of the no-purification principles, which is that they rule out any noisy-to-pure transformation for noisy input states or channels with free component, where the noisy inputs can be much more ``resourceful''  in terms of common resource measures or operational tasks than the pure targets.  This is in sharp contrast with generic (such as pure-to-pure) transformation tasks where the transformability is naturally determined by the resource content in general.  Also notably, our theory is applicable to virtually all well-defined resource theories (not even requiring the standard convexity assumption), highlighting the fundamental nature of the no-purification principles. }

The paper is organized as follows.  In Sec.~\ref{sec:state}, we apply the free component method to state theories, and in particular discuss the improvements over previous results in Ref.~\cite{FangLiu20}.  In  Sec.~\ref{sec:channel}, we establish the no-purification theory for quantum channels using the free component method.  We first present general-form results in Sec.~\ref{sec:channel general theory}, and then elaborate on specific scenarios and applications in
Sec.~\ref{sec:channel applications}.   Finally in Sec.~\ref{sec:conclusion} we summarize the work and discuss future directions.

\section{State theory}\label{sec:state}

We first consider state resource theories, which are built upon the notions of \emph{free states}  and \emph{free operations} that represent the allowed transformation among states. \new{Here, we consider the most general resource theory framework with the ``minimalist'' requirement---the  \emph{golden rule} that any free operation must map a free state to another free state, or in other words, cannot create resource (see, e.g.,~Refs.~\cite{ChitambarGour19,PhysRevLett.115.070503,PhysRevLett.118.060502}). This golden rule defines the largest possible set of operations that encompasses any legitimate set of free operations, and thus the fundamental limits induced by it apply universally to any nontrivial resource theory.  Also, for mathematical rigor, we assume that  the set of free states $\cF$ has the following two reasonable, commonly held properties:  (i) The composition of free states should be free, namely if $\rho_1, \rho_2 \in \cF$ then  $\rho_1\ox \rho_2 \in \cF$; (ii) $\cF$ is closed.} 

The following quantity that we call \emph{free component} will play a central role in our theory:
\begin{definition}[Free component]
The \emph{free component} of quantum state $\rho$ is defined as
\begin{equation}
\lfc_\rho := \max \left\{\gamma:
\,\rho-\gamma\sigma\geq 0, \sigma\in \cF\right\}.
\end{equation}
\end{definition}
Equivalently,
\begin{equation}
    \lfc_\rho = \max\left\{\gamma: \rho = \gamma\sigma + (1-\gamma)\tau, \sigma\in \cF,\tau \in \mathscr{D} \right\},
\end{equation}
where $\mathscr{D}$ is the set of all density matrices.
That is, the free component is directly related to the ``weight of resource'' $W$, which is recently studied in general resource theory contexts  \cite{Ducuara:weight,Uola:weight}, by $\lfc_\rho = 1-W_\rho$. Another equivalent form is $\lfc(\rho) = \min_{\sigma\in\cF}2^{D_{\max}(\sigma\|\rho)}$ where \new{the max-relative entropy is defined by $D_{\max}(\sigma\|\rho):= \log \min\{t: \sigma \leq t \rho\}$ if $\supp(\sigma) \subseteq \supp(\rho)$ and $+\infty$ otherwise~\cite{datta2009min}.} 
Note that, if $\cF$ can be characterized by semidefinite conditions (which is quite common, e.g., in coherence theory $\cF = \{\sigma: \sigma \geq 0, \tr \sigma = 1, \sigma = \Delta(\sigma)\}$, where $\Delta$ is the dephasing channel erasing the off-diagonal entries), then $\lfc_\rho$ can be efficiently computed by semidefinite programming (SDP) for given $\rho$. {In the resource theory of thermodynamics, \new{for Hamiltonian $H$ and inverse temperature $\beta$ the Gibbs (thermal) state ${\sigma}:= e^{-\beta H}/\tr{e^{-\beta H}}$ is the only free state} and we thus have a closed-form formula for free component as $\lfc_{\rho} = \frac{1}{\lambda^{\max}(\rho^{-1}\new{{\sigma}})}$ (where $\lambda^{\max}$ denotes the largest eigenvalue) if $\supp(\rho) \supseteq \supp(\new{{\sigma}})$ and zero otherwise~\cite[Theorem 2]{rudolph2004quantum}.}

It can be easily seen that the free component obeys the desirable monotonicity property that it cannot be reduced by free operations.
\begin{proposition}[Monotonicity]\label{lem: lfc monotone state}
For any state $\rho$ and any free operation $\cN$, it holds that $\lfc_\rho \leq \lfc_{\cN(\rho)}$.
\end{proposition}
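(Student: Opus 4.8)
The plan is to transport the optimal witness for $\lfc_\rho$ through the free operation $\cN$. Start from the definition: let $\gamma = \lfc_\rho$ and pick $\sigma \in \cF$ attaining the maximum, so that $\rho - \gamma\,\sigma \geq 0$. (The maximum is attained because $\cF$ is closed and the feasible set $\{(\gamma,\sigma): \gamma\geq 0,\ \sigma\in\cF,\ \rho-\gamma\sigma\geq 0\}$ is compact; alternatively one may take $\gamma$ arbitrarily close to the supremum, since the final inequality survives the limit.)

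Next, apply $\cN$. Being a quantum channel it is completely positive, hence positive, so $\cN(\rho) - \gamma\,\cN(\sigma) = \cN(\rho - \gamma\,\sigma) \geq 0$. Moreover, by the golden rule that defines a free operation, $\cN$ maps free states to free states, so $\cN(\sigma)\in\cF$. Therefore the pair $\big(\gamma,\cN(\sigma)\big)$ is feasible for the optimization defining $\lfc_{\cN(\rho)}$, whence $\lfc_{\cN(\rho)} \geq \gamma = \lfc_\rho$. Equivalently, in mixture language: writing $\rho = \lfc_\rho\,\sigma + (1-\lfc_\rho)\,\tau$ with $\sigma \in \cF$ and $\tau\in\mathscr D$, linearity of $\cN$ gives $\cN(\rho) = \lfc_\rho\,\cN(\sigma) + (1-\lfc_\rho)\,\cN(\tau)$, which is again a valid decomposition of the required form since $\cN(\sigma)\in\cF$ and $\cN(\tau)\in\mathscr D$.

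There is essentially no technical obstacle here; the statement is a direct structural consequence of (i) positivity of $\cN$ and (ii) the resource-non-generating (golden-rule) property. The only point worth a moment's care is the asymmetric role of the two parts: it is $\cN(\sigma)$ that must remain free, while $\cN(\tau)$ need only remain a legitimate state — this is precisely why monotonicity holds in one direction, as expected for a resource monotone, and why trace preservation of $\cN$ is not even needed for the positive-semidefinite version of the argument. For the cleanest writeup I would present the PSD-witness argument and relegate the mixture-form restatement to a parenthetical remark.
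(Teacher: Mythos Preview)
Your proof is correct and follows essentially the same argument as the paper: take an optimal free state $\sigma$ achieving $\lfc_\rho$, push the inequality $\rho-\lfc_\rho\sigma\geq 0$ through the positive map $\cN$, and invoke the golden rule to conclude $\cN(\sigma)\in\cF$ so that $(\lfc_\rho,\cN(\sigma))$ is feasible for $\lfc_{\cN(\rho)}$. Your additional remarks on attainment, the mixture restatement, and the irrelevance of trace preservation are fine but not needed for the result.
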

\begin{proof}
Suppose $\lfc_\rho$ is achieved by $\sigma \in \cF$. Then by definition $\rho - \lfc_\rho \sigma \geq 0$, and thus $\cN(\rho) - \lfc_\rho\cN(\sigma) \geq 0$. Since $\cN(\sigma) \in \cF$ by the golden rule, we have that $\lfc_{\cN(\rho)} \geq \lfc_\rho$ by definition.
\end{proof}
Moreover, it is super-multiplicative under tensor product of states:
\begin{proposition}[Super-multiplicity]\label{prop: lfc state super multiplicity}
For any quantum states $\rho_1,\rho_2$, it holds that $\lfc_{\rho_1\ox \rho_2} \geq \lfc_{\rho_1}  \lfc_{\rho_2}$.
\end{proposition}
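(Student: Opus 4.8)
The plan is to take optimal free-component decompositions of each factor and combine them multiplicatively. Concretely, let $\gamma_i := \lfc_{\rho_i}$ be achieved by $\sigma_i \in \cF$, so that $\rho_i - \gamma_i \sigma_i \geq 0$ for $i = 1,2$. The natural candidate witnessing the claimed bound for $\rho_1 \ox \rho_2$ is the free state $\sigma_1 \ox \sigma_2$, which indeed lies in $\cF$ by the composition property (i) of $\cF$, together with the weight $\gamma_1 \gamma_2$. So the whole task reduces to verifying the single operator inequality $\rho_1 \ox \rho_2 - \gamma_1\gamma_2\, \sigma_1 \ox \sigma_2 \geq 0$.

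The key step is a telescoping (``hybrid'') decomposition:
\begin{equation}
\rho_1 \ox \rho_2 - \gamma_1\gamma_2\, \sigma_1 \ox \sigma_2 = (\rho_1 - \gamma_1\sigma_1)\ox \rho_2 + \gamma_1\, \sigma_1 \ox (\rho_2 - \gamma_2\sigma_2).
\end{equation}
Each summand is a tensor product of positive semidefinite operators: in the first term $\rho_1 - \gamma_1\sigma_1 \geq 0$ by optimality of $\gamma_1$ and $\rho_2 \geq 0$; in the second term $\sigma_1 \geq 0$, $\rho_2 - \gamma_2\sigma_2 \geq 0$, and the scalar $\gamma_1 \geq 0$ (note $\gamma = 0$ is always feasible in the defining program, so $\lfc_\rho \geq 0$ for every state). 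Since the tensor product of PSD operators is PSD and a sum of PSD operators is PSD, the left-hand side is $\geq 0$. By the definition of free component applied to $\rho_1 \ox \rho_2$ with the feasible pair $(\sigma_1 \ox \sigma_2, \gamma_1\gamma_2)$, we conclude $\lfc_{\rho_1 \ox \rho_2} \geq \gamma_1\gamma_2 = \lfc_{\rho_1}\lfc_{\rho_2}$.

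There is no real obstacle here: the only thing to be slightly careful about is that we are allowed to use $\sigma_1 \ox \sigma_2 \in \cF$, which is exactly the assumed closure of $\cF$ under composition, and that $\gamma_1, \gamma_2 \geq 0$ so that the displayed decomposition is genuinely a sum of PSD terms. (If one prefers, the same argument can be phrased via the $D_{\max}$ characterization $\lfc(\rho) = \min_{\sigma\in\cF} 2^{D_{\max}(\sigma\|\rho)}$ using sub-additivity of $D_{\max}$ under tensor products, but the direct operator-inequality route above is cleaner and self-contained.) Equality need not hold in general, since the optimal free state for $\rho_1 \ox \rho_2$ need not be a product state.
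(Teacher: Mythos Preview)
Your proof is correct and follows essentially the same approach as the paper: both take optimal $\sigma_i\in\cF$ with $\rho_i\geq\gamma_i\sigma_i$, use $\sigma_1\ox\sigma_2\in\cF$, and conclude from $\rho_1\ox\rho_2\geq\gamma_1\gamma_2\,\sigma_1\ox\sigma_2$. The only difference is cosmetic---the paper invokes the tensor-monotonicity $A\geq B\geq 0,\ C\geq D\geq 0 \Rightarrow A\ox C\geq B\ox D$ directly, whereas you spell it out via the telescoping identity.
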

\begin{proof}
Suppose that the maximization in $\lfc_{\rho_1}, \lfc_{\rho_2}$ are, respectively, achieved by $\sigma_1, \sigma_2\in\cF$, that is, $\rho_i \geq \new{\lfc_{\rho_i}} \sigma_i, i=1,2$. It holds that $\rho_1 \ox \rho_2 \geq (\lfc_{\rho_1}\sigma_1) \ox (\lfc_{\rho_2}\sigma_2) = \lfc_{\rho_1}\lfc_{\rho_2}\sigma_1\ox\sigma_2$. Also note that $\sigma_1 \ox \sigma_2\in\cF$ axiomatically.  Therefore,  $\lfc_{\rho_1\ox \rho_2} \geq \lfc_{\rho_1}  \lfc_{\rho_2}$ by definition.  
\end{proof}






Consider the task of purification, namely transforming a noisy state $\rho$ to a certain target pure state $\psi= \ketbra{\psi}{\psi}$ up to some error.  Formally, the error of purification is defined by the infidelity with the target state: for input state $\rho$, transformation operation $\cN$ and target pure state $\psi$, the error 
\begin{align}
\varepsilon(\rho\xrightarrow{\cN}\psi) := 1-\tr\psi\cN(\rho) = 1-\bra{\psi}\cN(\rho)\ket{\psi}.
\end{align}
Also, let $f_{\psi}$ denote the maximum overlap of pure state $\psi = \ketbra{\psi}{\psi}$ with free states, namely, \begin{align}
f_{\psi}:= \max_{\sigma \in \cF} \tr \psi \sigma = \max_{\sigma \in \cF}\bra{\psi}\sigma\ket{\psi}.   
\end{align}
We now prove an improved deterministic no-purification theorem using a method different from Ref.~\cite{FangLiu20}, which directly connects the accuracy of purifying a noisy state with its free component.

\begin{theorem}\label{thm: nogo deterministic state transformation}
Given any state $\rho$ and any pure state $\psi$, there is no free operation that transforms $\rho$ to $\psi$ with error smaller than $\lfc_\rho(1-f_{\psi})$.  That is, it holds for any free operation $\cN$ that
\begin{align}
    \varepsilon(\rho\xrightarrow{\cN}\psi) \geq \lfc_\rho(1-f_\psi).
\end{align}
\end{theorem}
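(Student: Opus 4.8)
The plan is to use the affine decomposition of $\rho$ supplied by the free component, together with linearity of the channel and the golden rule. First I would note that since $\cF$ is closed and sits inside the compact set $\mathscr{D}$ of density matrices, the maximum defining $\lfc_\rho$ is attained; hence we may fix $\sigma\in\cF$ and $\tau\in\mathscr{D}$ with $\rho=\lfc_\rho\,\sigma+(1-\lfc_\rho)\,\tau$ (the second equivalent form of the definition). The whole proof then hinges on pushing this decomposition through $\cN$.

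Next, apply $\cN$. Because a channel is (convex-)linear on density operators, $\cN(\rho)=\lfc_\rho\,\cN(\sigma)+(1-\lfc_\rho)\,\cN(\tau)$. The golden rule gives $\cN(\sigma)\in\cF$, while $\cN(\tau)$ is merely some density matrix. Sandwiching with $\bra{\psi}\cdot\ket{\psi}$ and bounding the two terms separately — $\bra{\psi}\cN(\sigma)\ket{\psi}\le\max_{\sigma'\in\cF}\tr\psi\sigma'=f_\psi$ by definition of $f_\psi$, and $\bra{\psi}\cN(\tau)\ket{\psi}\le 1$ trivially — yields $\bra{\psi}\cN(\rho)\ket{\psi}\le\lfc_\rho f_\psi+(1-\lfc_\rho)=1-\lfc_\rho(1-f_\psi)$. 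Rearranging, $\varepsilon(\rho\xrightarrow{\cN}\psi)=1-\bra{\psi}\cN(\rho)\ket{\psi}\ge\lfc_\rho(1-f_\psi)$, which is the claim.

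There is essentially no hard step here: the only points requiring a word of care are the attainment of the maximum in the definition of $\lfc_\rho$ (handled by closedness of $\cF$ and compactness of $\mathscr{D}$) and the observation that nothing beyond the golden rule is used — in particular no convexity of $\cF$ and no special structure of the target. I would also remark, looking ahead to the tightness discussion, that the two inequalities used are simultaneously saturated precisely when some free operation sends the optimal $\tau$ to a state with unit overlap on $\psi$ while sending $\sigma$ to a free state achieving $f_\psi$, which pinpoints when the bound is expected to be tight.
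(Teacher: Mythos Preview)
Your proposal is correct and follows essentially the same approach as the paper's proof: decompose $\rho$ via the free component, push the decomposition through $\cN$ by linearity, apply the golden rule to bound $\tr\psi\,\cN(\sigma)\le f_\psi$, and bound the other term by $1$. Your added remarks on attainment of the maximum and on the saturation conditions are fine extras but not part of the paper's argument.
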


\begin{proof}
By the definition of $\lfc_\rho$, there exists free state $\sigma\in\cF$ and state $\tau$ such that $\rho$ can be decomposed as follows: 
\begin{equation}
    \rho = \lfc_\rho\sigma + (1-\lfc_\rho)\tau.
\end{equation}
Let $\cN$ be any free operation.  By linearity,
\begin{equation}
    \cN(\rho) = \lfc_\rho \cN(\sigma) + (1-\lfc_\rho)\cN(\tau).
\end{equation}
Then it holds that
\begin{align}
\varepsilon(\rho\xrightarrow{\cN}\psi)&= 1-\tr\cN(\rho)\psi \\&= 1- \lfc_\rho\tr\cN(\sigma)\psi - (1-\lfc_\rho)\tr\cN(\tau)\psi \\ &\geq 1- \lfc_\rho f_{\psi} - (1-\lfc_\rho)\\
&=\lfc_\rho(1-f_\psi),
\end{align}
where the inequality follows from  $\tr\cN(\sigma)\psi \leq f_\psi$ since $\cN(\sigma)\in\cF$ by the golden rule, and $\tr\cN(\tau)\psi \leq 1$.
\end{proof}
As first noted in Ref.~\cite{FangLiu20}, we can translate the upper bounds on transformation accuracy into lower bounds on the ``amount'' of input resources required to achieve a certain target, in particular, the cost of many-copy distillation procedures, which are widely considered for various purposes in quantum computation and information \cite{BBPS96:ent_dist,BBPSSW96:ent_dist,BDSW96:ent_qec,Bravyi2005}.
The above Theorem~\ref{thm: nogo deterministic state transformation} induces the following general lower bound on distillation overhead.
\begin{corollary}
Consider distillation procedures represented by a free operation that transform $n$ copies of noisy states $\rho$ to a target pure state $\psi$ within error $\epsilon$. Then $n$ must satisfy:
\begin{equation}\label{eq: state overhead}
    n\geq \Bigg[\log\frac{1-f_\psi}{\epsilon}\Bigg]\Bigg[\log\frac{1}{\lfc_{\rho}}\Bigg]^{-1}.
\end{equation}
\label{cor:overhead}
\end{corollary}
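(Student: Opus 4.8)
The plan is to reduce the many-copy statement to a single application of Theorem~\ref{thm: nogo deterministic state transformation} to the input $\rho^{\ox n}$, combined with the super-multiplicativity of the free component. First I would observe that a distillation procedure taking $n$ copies of $\rho$ to $\psi$ within error $\epsilon$ is, by definition, a free operation $\cN$ with $\varepsilon(\rho^{\ox n}\xrightarrow{\cN}\psi)\leq \epsilon$. Applying Theorem~\ref{thm: nogo deterministic state transformation} with the state $\rho^{\ox n}$ then gives
\begin{equation}
\epsilon \;\geq\; \varepsilon(\rho^{\ox n}\xrightarrow{\cN}\psi)\;\geq\; \lfc_{\rho^{\ox n}}\,(1-f_\psi).
\end{equation}

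Next I would invoke Proposition~\ref{prop: lfc state super multiplicity} inductively to get $\lfc_{\rho^{\ox n}}\geq (\lfc_\rho)^n$, so that $(\lfc_\rho)^n(1-f_\psi)\leq \epsilon$, i.e. $(\lfc_\rho)^n\leq \epsilon/(1-f_\psi)$. Taking logarithms and using that $\lfc_\rho\leq 1$ — so that $\log\lfc_\rho\leq 0$ and the inequality reverses upon dividing through — yields
\begin{equation}
n \;\geq\; \frac{\log\big(\epsilon/(1-f_\psi)\big)}{\log \lfc_\rho} \;=\; \Bigg[\log\frac{1-f_\psi}{\epsilon}\Bigg]\Bigg[\log\frac{1}{\lfc_\rho}\Bigg]^{-1},
\end{equation}
which is exactly \eqref{eq: state overhead}.

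There is no genuinely hard step here: all the content is already packaged into Theorem~\ref{thm: nogo deterministic state transformation} and Proposition~\ref{prop: lfc state super multiplicity}. The only points requiring a little care are the boundary cases. If $\lfc_\rho=0$ the right-hand side of \eqref{eq: state overhead} is $0$ and there is nothing to prove; if $\lfc_\rho=1$ (i.e.\ $\rho$ is itself free) then $\log(1/\lfc_\rho)=0$ and the bound should be read as $n=\infty$, consistent with the fact that Theorem~\ref{thm: nogo deterministic state transformation} forces the error to stay $\geq 1-f_\psi$ for every $n$. One should also implicitly assume $f_\psi<1$ and $\epsilon<1-f_\psi$, so that the target accuracy lies below the universal floor $1-f_\psi$ and the numerator $\log\tfrac{1-f_\psi}{\epsilon}$ is positive; otherwise the stated inequality holds trivially for all $n\geq 0$.
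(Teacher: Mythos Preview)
Your proposal is correct and follows essentially the same route as the paper: apply Theorem~\ref{thm: nogo deterministic state transformation} to $\rho^{\ox n}$, invoke the super-multiplicativity Proposition~\ref{prop: lfc state super multiplicity} to bound $\lfc_{\rho^{\ox n}}\geq(\lfc_\rho)^n$, and rearrange. Your added remarks on the logarithm step and the boundary cases $\lfc_\rho\in\{0,1\}$ are a nice touch of care beyond what the paper spells out.
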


\begin{proof}
Suppose the transformation is given by the free operation $\cN$. Then it holds from Theorem~\ref{thm: nogo deterministic state transformation} that
\begin{align}
\epsilon \geq \ve(\rho^{\ox n}\xrightarrow{\cN}\psi) \geq \lfc_{\rho^{\ox n}}(1-f_\psi).
\end{align}
Note that,  due to  the super-multiplicity property from Proposition~\ref{prop: lfc state super multiplicity}, we get $\lfc_{\rho^{\ox n}}\geq (\lfc_{\rho})^n$. This gives 
\begin{align}
\epsilon \geq (\lfc_{\rho})^n (1-f_\psi),
\end{align}
which is equivalent to the above assertion.
\end{proof}

Our new method essentially replaces the minimum eigenvalue of $\rho$ in the corresponding bounds in Ref.~\cite{FangLiu20} (which we refer to as the min-eigenvalue bounds) by its free component, which represents a significant improvement from both qualitative and quantitative perspectives, as detailed in the following.

First, the range of applicability of the no-purification theorem is significantly extended.  The proof using the quantum hypothesis testing relative entropy presented in Ref.~\cite{FangLiu20} applies only to full-rank input states.  However,  Theorem~\ref{thm: nogo deterministic state transformation} implies that the no-purification rule actually holds more broadly (see also Ref.~\cite[Proposition 2]{RBTL20:benchmarking}):
\begin{corollary}
There is no free operation that exactly transforms a state $\rho$ to any pure state $\psi\notin\cF$  if $\lfc_\rho > 0$.
\end{corollary}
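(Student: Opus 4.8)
The plan is to obtain the statement as an immediate corollary of Theorem~\ref{thm: nogo deterministic state transformation}, arguing by contradiction. Suppose some free operation $\cN$ transforms $\rho$ exactly to $\psi$, i.e.\ $\cN(\rho)=\psi$, so that $\varepsilon(\rho\xrightarrow{\cN}\psi)=0$. Substituting into the bound of Theorem~\ref{thm: nogo deterministic state transformation} yields $0\geq\lfc_\rho(1-f_\psi)$. Invoking the hypothesis $\lfc_\rho>0$, I would divide through to get $f_\psi\geq 1$; together with the trivial estimate $f_\psi=\max_{\sigma\in\cF}\tr\psi\sigma\leq 1$ (each $\tr\psi\sigma=\bra{\psi}\sigma\ket{\psi}\leq\tr\sigma=1$), this forces $f_\psi=1$.

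It then remains to show that $f_\psi=1$ with $\psi$ pure forces $\psi\in\cF$, which is the only nonroutine point and is exactly where the standing assumption that $\cF$ is closed enters. Since $\cF$ is a closed subset of the compact set of density matrices, the supremum defining $f_\psi$ is attained by some $\sigma^\star\in\cF$ with $\bra{\psi}\sigma^\star\ket{\psi}=1$. Taking a spectral decomposition $\sigma^\star=\sum_i p_i\ketbra{\phi_i}{\phi_i}$ with $p_i>0$ and $\sum_i p_i=1$, the equality $\sum_i p_i|\braket{\psi}{\phi_i}|^2=1$ together with the bounds $|\braket{\psi}{\phi_i}|^2\leq 1$ forces $|\braket{\psi}{\phi_i}|=1$ for every $i$, hence $\ketbra{\phi_i}{\phi_i}=\psi$ and thus $\sigma^\star=\psi$. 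Therefore $\psi\in\cF$, contradicting $\psi\notin\cF$, and the corollary follows.

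The hard part, such as it is, will be this last implication; without closedness of $\cF$ one could only conclude that $\psi$ lies in $\overline{\cF}$. Since Theorem~\ref{thm: nogo deterministic state transformation} and the closedness assumption are already in place in the excerpt, no additional machinery is required, and the argument above completes the proof directly.
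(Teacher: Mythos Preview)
Your proof is correct and follows essentially the same approach as the paper: both deduce the corollary from Theorem~\ref{thm: nogo deterministic state transformation} together with the implication $f_\psi=1\Rightarrow\psi\in\cF$ (equivalently, $\psi\notin\cF\Rightarrow f_\psi<1$), which relies on closedness of $\cF$. The only difference is cosmetic: the paper argues directly (from $\psi\notin\cF$ and closedness conclude $f_\psi<1$, hence $\varepsilon>0$), whereas you run the contrapositive by contradiction and spell out the spectral-decomposition argument for $f_\psi=1\Rightarrow\psi\in\cF$ that the paper leaves implicit.
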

\begin{proof}
Since $\cF$ is closed by assumption and $\psi \notin \cF$, we have $f_\psi < 1$. Then due to Theorem~\ref{thm: nogo deterministic state transformation}, the transformation error $\ve > 0$, indicating that exact transformation is impossible.
\end{proof}

Below we give some alternative useful characterizations of the $\lfc > 0$ condition.
\begin{proposition}\label{prop:no-go conditions}
For any quantum state $\rho$, the following conditions are equivalent:
\begin{enumerate}
    \item[(a)] Free component: $\lfc_\rho>0$;
    \item[(b)] Support: There exists a free state $\sigma \in \cF$ such that the support condition $\mathrm{supp}(\rho) \supseteq \mathrm{supp}(\sigma)$ holds;
    \item[(c)] Resource measure: The min-relative entropy of resource $\mathfrak{D}_{\min}(\rho):= \min_{\sigma \in \cF} D_{\min}(\rho\|\sigma) = 0$, where $D_{\min}(\rho\|\sigma) := -\log\tr\Pi_\rho\sigma$ is the min-relative entropy, and $\Pi_\rho$ is the projector onto $\supp(\rho)$.
\end{enumerate}
\end{proposition}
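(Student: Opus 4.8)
The plan is to prove the three conditions equivalent by running the cycle $(a)\Rightarrow(b)\Rightarrow(c)\Rightarrow(a)$; each implication is short, and only the last one needs a compactness input. For $(a)\Rightarrow(b)$: if $\lfc_\rho>0$, the definition provides some $\gamma>0$ and a free state $\sigma\in\cF$ with $\rho-\gamma\sigma\geq 0$. For any $\ket v$ in the kernel of $\rho$ this forces $0=\bra v\rho\ket v\geq\gamma\bra v\sigma\ket v\geq 0$, so $\bra v\sigma\ket v=0$ and hence $\sigma\ket v=0$ by positivity of $\sigma$. Thus $\ker(\rho)\subseteq\ker(\sigma)$, equivalently $\supp(\rho)\supseteq\supp(\sigma)$, which is $(b)$. (This is exactly the support-finiteness criterion underlying the reformulation $\lfc_\rho=\min_{\sigma\in\cF}2^{-D_{\max}(\sigma\|\rho)}$ noted above.)

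For $(b)\Rightarrow(c)$: observe first that for every density matrix $\sigma$ one has $\tr\Pi_\rho\sigma\leq\tr\sigma=1$, so $D_{\min}(\rho\|\sigma)=-\log\tr\Pi_\rho\sigma\geq 0$, and hence $\mathfrak{D}_{\min}(\rho)\geq 0$ always. Now take the $\sigma\in\cF$ supplied by $(b)$: since $\supp(\sigma)\subseteq\supp(\rho)$ we have $\Pi_\rho\sigma=\sigma$, so $\tr\Pi_\rho\sigma=1$ and $D_{\min}(\rho\|\sigma)=0$; combined with nonnegativity this gives $\mathfrak{D}_{\min}(\rho)=0$.

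For $(c)\Rightarrow(a)$: $\mathfrak{D}_{\min}(\rho)=0$ says $\inf_{\sigma\in\cF}\bigl(-\log\tr\Pi_\rho\sigma\bigr)=0$, i.e.\ $\sup_{\sigma\in\cF}\tr\Pi_\rho\sigma=1$. Since $\cF$ is closed and bounded, hence compact, and $\sigma\mapsto\tr\Pi_\rho\sigma$ is continuous, the supremum is attained at some $\sigma^\star\in\cF$ with $\tr\Pi_\rho\sigma^\star=1=\tr\sigma^\star$. Then $\tr\bigl((\1-\Pi_\rho)\sigma^\star\bigr)=0$; writing $\sigma^\star=\sum_i p_i\proj{v_i}$ and using $\1-\Pi_\rho\geq 0$ gives $\sum_i p_i\,\|(\1-\Pi_\rho)\ket{v_i}\|^2=0$, so $\supp(\sigma^\star)\subseteq\supp(\rho)$. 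Letting $\mu>0$ be the smallest nonzero eigenvalue of $\rho$, we get $\sigma^\star\leq\Pi_\rho\leq\mu^{-1}\rho$, hence $\rho-\mu\sigma^\star\geq 0$ with $\sigma^\star\in\cF$, so $\lfc_\rho\geq\mu>0$, which is $(a)$ — and this same step re-establishes $(b)$, so the cycle is in fact very tight.

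The bulk of the argument is elementary linear algebra about supports and kernels and poses no real difficulty. The one place that deserves care is the attainment step in $(c)\Rightarrow(a)$: one must use compactness of $\cF$ (guaranteed by the standing closedness assumption) to pass from ``the supremum of $\tr\Pi_\rho\sigma$ equals $1$'' to an actual free state saturating it, and then argue via operator positivity that $\tr((\1-\Pi_\rho)\sigma^\star)=0$ genuinely forces the support inclusion rather than merely a trace-level identity. Everything else follows by unwinding definitions.
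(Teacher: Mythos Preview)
Your proof is correct and follows the same cyclic scheme $(a)\Rightarrow(b)\Rightarrow(c)\Rightarrow(a)$ as the paper. The only substantive difference is in the step $(c)\Rightarrow(a)$: after obtaining a free state $\sigma^\star$ with $\supp(\sigma^\star)\subseteq\supp(\rho)$, the paper invokes $D_{\max}(\sigma^\star\|\rho)<\infty$ to produce a $t$ with $\sigma^\star\leq t\rho$, whereas you give the explicit bound $\sigma^\star\leq\Pi_\rho\leq\mu^{-1}\rho$ via the smallest nonzero eigenvalue $\mu$ of $\rho$. Your route is slightly more concrete (it yields the quantitative estimate $\lfc_\rho\geq\mu$ directly) and avoids the case split $\rho=\sigma^\star$ versus $\rho\neq\sigma^\star$; the paper's route ties the argument back to the $D_{\max}$ reformulation of $\lfc_\rho$. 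Both are equally valid, and your explicit appeal to compactness of $\cF$ to guarantee attainment of the minimum is a welcome clarification that the paper leaves implicit in writing ``$\min$''. One harmless slip: in your parenthetical aside you write $\lfc_\rho=\min_{\sigma\in\cF}2^{-D_{\max}(\sigma\|\rho)}$, but this should be a maximum over $\sigma$ (the paper's own displayed formula also appears to have a sign issue here); this remark is not used in your actual argument.
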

\begin{proof}	
It is easy to see that (a) implies (b) and (b) implies (c). It remains to show that (c) implies (a). Suppose (c) holds, then there exists $\sigma \in \cF$ such that $D_{\min}(\rho\|\sigma) = 0$. By definition, we have  $\tr (I - \Pi_{\rho}) \sigma = 0$. That is, $\ker(\rho) \perp \mathrm{supp}(\sigma)$, and equivalently, $\mathrm{supp}(\rho) \supseteq \mathrm{supp}(\sigma)$. If $\rho = \sigma$, then (a) holds. Otherwise, we have $D_{\max}(\sigma\|\rho) > 0$. (Note that $D_{\max}(\sigma\|\rho) = 0$ if and only if $\rho = \sigma$.) This implies that there exists $t > 1$ such that $\sigma \leq t \rho$. \new{Thus $\Gamma_\rho \geq 1/t > 0$}, implying (a).
\end{proof}
It is  clear  that for any pure resource state $\psi\notin\cF$ we have  $\lfc_\psi = 0$, so the no-purification bounds can only be nontrivial for mixed states.
Meanwhile, it can be immediately seen (e.g.~from (b)) that the $\lfc > 0$ condition is weaker than the full-rank condition.  In fact, it holds as long as the support of $\rho$ contains some free state in its support, which is generically the case for mixed states in common resource theories.
Also note that the $\lfc > 0$ condition does not necessarily hold for all mixed states.  For a concrete example, consider the coherence theory defined by an orthonormal basis $\{\ket{0},\ket{1},\ket{2},\ket{3}\}$. Consider the state $\rho = (\ket{\psi_1}\bra{\psi_1}+\ket{\psi_2}\bra{\psi_2})/2$ where $\ket{\psi_1}=(\ket{0}+\ket{1})/\sqrt{2}, \ket{\psi_2}=(\ket{2}+\ket{3})/\sqrt{2}$. It can be verified that $\rho$ is mixed but $\lfc_\rho = 0$ because any decreasing of the diagonal entries will render the matrix negative. It would be interesting to further understand and characterize the $\lfc > 0$ condition in specific theories.

\new{Furthermore, note that the derivation and results (also the channel versions below) apply to continuous variable or infinite-dimensional quantum systems: the relevant quantities, the free component $\Gamma$ and the maximum overlap $f$, can be defined likewise (supremum instead of maximum over $\cF$), and the proof steps follow.  In particular, $\Gamma > 0, f<1$ still indicate no-purification.  An elementary continuous variable example  will be given later.  }

We remark that if we only require the purification transformation to succeed with some probability (the probabilistic setting), the $\lfc>0$ condition is not sufficient to rule out purification and it seems that the full-rank condition cannot be alleviated.  For example, consider the following state with a flag register $F$:
\begin{equation}
    \rho = p|0\rangle\langle 0|_F \otimes \psi_{A} + (1-p)|1\rangle\langle 1|_F \otimes \tau_{A},
\end{equation}
where $\psi$ is the target pure state and $\tau$ is a state such that $\lfc_\tau > 0$.   Then we have $\lfc_\rho>0$  ($\rho$ is not full-rank), but we can obtain $\psi$ with probability $p$ simply by measuring $F$ (which is conventionally free) and postselect on 0. 

Second, the new results are quantitatively better than the corresponding ones in Ref.~\cite{FangLiu20}  for full-rank input states.  It is first straightforward to see that $\lfc_\rho \geq  \lambda^{\min}_\rho$, where $\lambda^{\min}_\rho$ denotes the minimum nonzero eigenvalue of $\rho$, because
$\rho \geq \lambda_\rho^{\min} \cdot I \geq \lambda_\rho^{\min} \cdot \sigma$ for any state $\sigma$ where $I$ denotes the identity matrix on $\supp(\rho)$. So by definition, $\lfc_\rho \geq \lambda^{\min}_\rho$.  In sum, the new free component bounds  cover the min-eigenvalue bounds.
In particular, when the noisy state $\rho$ is close to the set of free states $\cF$,  the minimum eigenvalue  $\lambda^{\min}_\rho$ could still be small but $\lfc_\rho$ approaches one. This indicates that the free component bounds potentially exhibit much tighter behaviors in the large error regime like when $\rho$ is close to $\cF$. 
Importantly, the distillation overhead bound Corollary~\ref{cor:overhead} indicates the key behavior that as $\rho$ approaches $\cF$, it holds that $n\rightarrow \infty$, i.e.~the number of copies needed diverges,  because $\lfc_{\rho}\rightarrow 1$. This cannot be deduced from the min-eigenvalue bounds in Ref.~\cite{FangLiu20}.   

Now we discuss the application of our general bounds in a few important specific scenarios that are of practical interest in diverse manners, showcasing the versatility of our theory.  In particular, it is concretely demonstrated that the free component bounds can strictly outperform the corresponding min-eigenvalue bounds in Ref.~\cite{FangLiu20} and notably, could be tight,  in key scenarios.
\begin{exmp}[Magic state distillation]
Consider $T$ states $\ket{T} = T\ket{+} = (\ket{0}+e^{i\pi/4}\ket{1})/\sqrt{2}$ contaminated by depolarizing or dephasing noise,
given by
\begin{equation}\label{eq: noisy magic state}
    \tau = (1-\zeta)\ket{T}\bra{T} + \zeta\frac{I}{2}, 
\end{equation}
where $\zeta$ is the noise rate, as the input.  Note that we are interested in $\zeta\in(0,1-1/\sqrt{2})$ so that $\tau$ is a mixed state outside of the stabilizer hull. On the one hand, it can be directly checked that $\lambda^{\min}_\tau = \zeta/2$.  On the other hand, $\lfc_\tau$ is bounded as follows. Consider the free state
\begin{equation}\notag
    \bar\tau =  \frac{1}{2}(\ket{+}\bra{+} + T^2\ket{+}\bra{+} {T^\dagger}^2) = \frac{1}{4}\begin{pmatrix}
2 & {1-i}\\
{1+i} & 2
\end{pmatrix} \in \cF,
\end{equation}
which sits at the edge of the stabilizer hull closest to $\ket{T}$ (as depicted in Fig.~\ref{fig:bloch}).
\begin{figure}
\centering
\includegraphics{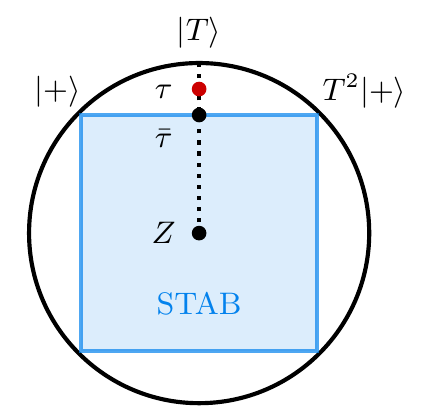}
\caption{The cross section of the Bloch sphere through the center perpendicular to the $Z$ axis. The blue square represents the corresponding cross section of the stabilizer hull (STAB). $\bar\tau$ is actually $\ket{T}$ subject to $p=1-1/{\sqrt{2}}$ depolarizing noise and lies on the edge of STAB. $\tau$ is the noisy input state that lies between $\ket{T}$ and $\bar\tau$.}
 \label{fig:bloch}
 \end{figure}
 Then by definition we have 
\begin{align}\notag
\lfc_\tau \geq \max\{\gamma: \tau - \gamma \bar\tau \geq 0\} = \max\left\{\gamma: \begin{pmatrix}
\alpha & \beta\\
\bar \beta & \alpha
\end{pmatrix} \geq 0\right\},
\end{align}
with $\alpha = \frac{1}{2}(1-\gamma)$ and $\beta = \frac{1-\zeta}{2}e^{-i\pi/4}-\gamma\frac{1-i}{4}$.
By solving the determinant we obtain that
\begin{equation}\label{eq: tau lower bound}
\lfc_\tau \geq (2+\sqrt{2})\zeta,
\end{equation}
when $\zeta\in(0,1-1/\sqrt{2})$.
This implies $\lfc_\tau > \lambda^{\min}_\tau$, and thus the previous error bound is outperformed for any pure target state by a constant factor.  As a sanity check, the bound indeed approaches 1 as $\zeta \rightarrow 1-1/\sqrt{2}$, in contrast to the $\lambda^{\min}$ bound.
This indeed implies the expected phenomenon that the total distillation overhead blows up as $\tau$ approaches the stabilizer hull. 
In particular, for the standard task of distilling $T$ states, we thus obtain an improved bound on the average overhead following the proof of Theorem 3 in Ref.~\cite{FangLiu20}. 
\begin{corollary}
Consider the following common formulation of magic state distillation task: given $n$ copies of noisy states $\tau$ (defined in Eq.~\eqref{eq: noisy magic state}), output an $m$-qubit state $\sigma$ such that $\tr\sigma_i T = \bra{T}\sigma_i\ket{T}\geq 1-\epsilon, \forall i = 1,\cdots, m$ where $\sigma_i = \tr_{\bar{i}}\sigma$ is the $i$-th qubit, by some free (stabilizer-preserving) operation. Then $n$ must satisfy:
\begin{align}
n &\geq \Bigg[\log\frac{(4-2\sqrt{2})^m-1}{(4-2\sqrt{2})^m m\epsilon}\Bigg]\Bigg[\log\frac{2-\sqrt{2}}{2\zeta}\Bigg]^{-1}.
\end{align}
\end{corollary}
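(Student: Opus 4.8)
The plan is to reduce the statement to Theorem~\ref{thm: nogo deterministic state transformation}, applied to the input $\tau^{\ox n}$ with the pure target $\ket{T}^{\ox m}$, after first turning the $m$ single-qubit fidelity requirements into a single global infidelity bound. Writing $\sigma=\cN(\tau^{\ox n})$ for the output and $P_i:=\ketbra{T}{T}_i\ox I_{\bar i}$, one has $(\ketbra{T}{T})^{\ox m}=\prod_{i=1}^m P_i$ together with the operator inequality $I-\prod_i P_i\leq\sum_i(I-P_i)$, the standard union bound for commuting projectors (proved by induction). Hence
\begin{equation}
\bra{T^{\ox m}}\sigma\ket{T^{\ox m}} \geq 1-\sum_{i=1}^m\big(1-\bra{T}\sigma_i\ket{T}\big) \geq 1-m\epsilon,
\end{equation}
i.e.\ $\ve\big(\tau^{\ox n}\xrightarrow{\cN}\ket{T}^{\ox m}\big)\leq m\epsilon$.

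Next I would apply Theorem~\ref{thm: nogo deterministic state transformation} with $\rho=\tau^{\ox n}$ and this target, which gives $m\epsilon\geq\lfc_{\tau^{\ox n}}\big(1-f_{\ket{T}^{\ox m}}\big)$. For the free component I use super-multiplicity (Proposition~\ref{prop: lfc state super multiplicity}) iterated $n$ times together with the explicit estimate in Eq.~\eqref{eq: tau lower bound}, obtaining $\lfc_{\tau^{\ox n}}\geq(\lfc_\tau)^n\geq\big((2+\sqrt2)\zeta\big)^n$. For the maximum stabilizer overlap, the closest single-qubit stabilizer state to $\ket{T}$ is $\ket{+}$ (or $\ket{+i}$), so $f_{\ket{T}}=\max_{\sigma\in\cF}\bra{T}\sigma\ket{T}=|\braket{+}{T}|^2=\tfrac14|1+e^{i\pi/4}|^2=\tfrac{2+\sqrt2}{4}=(4-2\sqrt2)^{-1}$; invoking the multiplicativity of the stabilizer fidelity on tensor powers of $T$ states, $f_{\ket{T}^{\ox m}}=(f_{\ket{T}})^m=(4-2\sqrt2)^{-m}$, so $1-f_{\ket{T}^{\ox m}}=\big((4-2\sqrt2)^m-1\big)/(4-2\sqrt2)^m$.

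Combining the three estimates yields $\big((2+\sqrt2)\zeta\big)^n\leq(4-2\sqrt2)^m\,m\epsilon\,/\,\big((4-2\sqrt2)^m-1\big)$. Since $(2+\sqrt2)(1-1/\sqrt2)=1$, we have $0<(2+\sqrt2)\zeta<1$ throughout $\zeta\in(0,1-1/\sqrt2)$, so taking logarithms and dividing by the negative quantity $\log\big((2+\sqrt2)\zeta\big)$ reverses the inequality; rewriting $1/\big((2+\sqrt2)\zeta\big)=(2-\sqrt2)/(2\zeta)$ then produces exactly the asserted bound on $n$. This mirrors the proof of Theorem~3 in Ref.~\cite{FangLiu20}, with the free component $\lfc_\tau$ in place of the minimum eigenvalue $\lambda^{\min}_\tau=\zeta/2$ used there, which is precisely the source of the improvement.

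The step I expect to be the main obstacle is the multiplicativity $f_{\ket{T}^{\ox m}}=(f_{\ket{T}})^m$: the direction ``$\geq$'' is immediate (use product free states), but the reverse — that no \emph{entangled} stabilizer state overlaps the product magic state $\ket{T}^{\ox m}$ by more than the best product stabilizer state — is a structural fact about stabilizer states that does not follow from the general resource-theoretic machinery of the paper and must be imported from the literature on stabilizer fidelity. (Alternatively, one could sidestep it by applying Theorem~\ref{thm: nogo deterministic state transformation} to each single-qubit marginal $\sigma_i=\tr_{\bar i}\sigma$ separately, since discarding qubits is a free operation; this even yields a slightly stronger lower bound on $n$, but the form stated in the corollary is the one that parallels Ref.~\cite{FangLiu20}.)
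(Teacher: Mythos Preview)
Your proposal is correct and follows essentially the same route as the paper: union bound to get $\bra{T^{\ox m}}\sigma\ket{T^{\ox m}}\geq 1-m\epsilon$, then the free-component error bound (the paper phrases this as plugging into the overhead formula Eq.~\eqref{eq: state overhead}, which is equivalent), using $\lfc_\tau\geq(2+\sqrt2)\zeta$ and $f_{T^{\ox m}}=(4-2\sqrt2)^{-m}$. The step you flag as the main obstacle---multiplicativity of the stabilizer fidelity on $\ket{T}^{\ox m}$---is indeed not derived within the paper; it is imported from the literature (the paper cites \cite{LBT19,Campbell11:catalysis,BravyiGosset16,bravyi2019simulation} for exactly this fact), so your assessment matches the paper's treatment precisely.
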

\begin{proof}
By applying the union bound, we have $\bra{T^{\otimes m}}\sigma\ket{T^{\otimes m}} \geq 1-m\epsilon$. Recall that $\lfc_\tau \geq (2+\sqrt{2})\zeta$ in Eq.~\eqref{eq: tau lower bound} and notice that $f_{T^{\otimes m}} = (4-2\sqrt{2})^{-m}$ \cite{LBT19,Campbell11:catalysis,BravyiGosset16,bravyi2019simulation}. By plugging everything into Eq.~({\ref{eq: state overhead}}) we obtain the claimed bound.
\end{proof}
\end{exmp}

\begin{exmp}[Coherence]
Consider the maximally coherent qubit state $\ket{+}=(\ket{0}+\ket{1})/\sqrt{2}$ contaminated by typical noise channels, including depolarizing, dephasing, and amplitude damping, as the input. 

For the depolarizing noise (here the dephasing noise has an equivalent effect), the noisy state is given by $\rho = [\ketbra{0}{0}+\ketbra{1}{1}+(1-\mu)(\ketbra{0}{1}+\ketbra{1}{0})]/2$ where $\mu\in(0,1)$ is the noise rate.  Then $\lambda^{\min}_\rho = \mu/2$, and it can be easily calculated that $ \lfc_\rho = \mu$. 
That is, the new error bound is twice the min-eigenvalue bound for any pure target state. 

For the amplitude damping noise, the free component bounds have a more remarkable advantage.  Here the noisy state is given by 
$\rho = [(1+\nu)\ketbra{0}{0} + (1-\nu)\ketbra{1}{1} + \sqrt{1-\nu}(\ketbra{0}{1}+\ketbra{1}{0})]/2$
where $\nu\in(0,1)$ is the noise rate. Then $\lambda^{\min}_\rho = \frac{1}{2}(1-\sqrt{1-\nu+\nu^2})$. 
We numerically solve $\lfc_\rho$, and compare it with $\lambda^{\min}_\rho$  in Fig.~\ref{fig:noise}(b) (note that the values plotted are all multiplied by a factor $1/2$; see below). 
Note that, as $\nu$ increases, i.e.~$\rho$ is more heavily damped towards the free state $\ket{0}$, the error of purification is expected to grow. As can be seen from Fig.~\ref{fig:noise}(b), as $\nu\rightarrow 1$, $\lambda^{\min}_\rho$ vanishes and so do corresponding bounds, but $\lfc_\rho$ indeed keeps growing, showcasing an important scenario where only the free component bounds are nontrivial.

Let us explicitly consider $\ket{+}$ as the target state. 
It is known that the optimal fidelity of transforming $\rho$ to $\ket{+}$ by free operations (MIO) can be solved by the following SDP~\cite[Theorem 3]{regula2018one}:
\begin{align}
    \max \left\{\tr G\rho : 0 \leq G \leq I, \Delta(G) = \frac{I}{2} \right\},
    \label{eq:mio-sdp}
\end{align}
where $\Delta$ takes the diagonal part of a given matrix.
In Fig.~\ref{fig:noise}, we plot the optimal error obtained by the above SDP as well as the free component and min-eigenvalue lower bounds for comparison.  
In particular, for depolarizing or dephasing noise, the free component error bound turns out to be tight, i.e.~is achievable, for any noise rate.  

\begin{figure}[]
\centering
\includegraphics[]{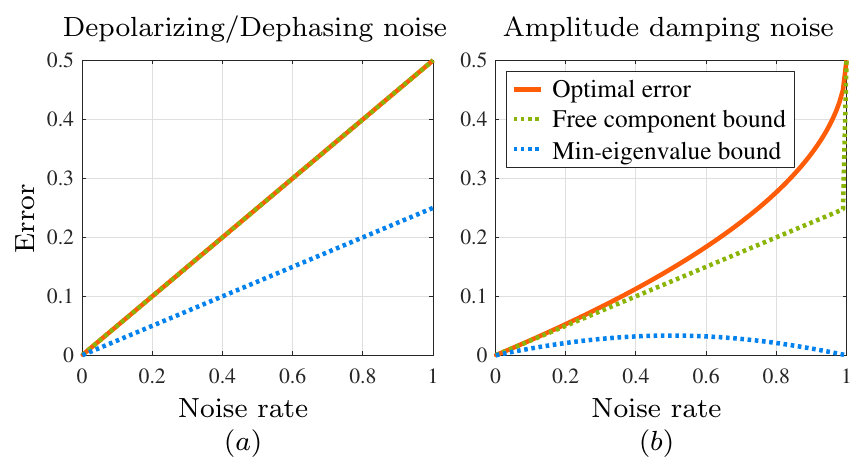}

\caption{Comparisons between the optimal achievable error of a standard purification task and the lower bounds induced by $\lfc$ (this work) and $\lambda^{\min}$ (Ref.~\cite{FangLiu20}) in coherence theory. The task is to recover the maximally coherent qubit state $\ket{+}$ under typical noise channels: (a) depolarizing and dephasing; (b) amplitude damping. The green and blue dashed lines are respectively the free component and min-eigenvalue lower bounds on the error, and the red line is the minimum error achieved by MIO computed by SDP Eq.~(\ref{eq:mio-sdp}).  In (a) the green dashed line actually overlaps with the red line, indicating that the free component error bound is tight.   
}\label{fig:noise}
\end{figure}

\end{exmp}
\begin{exmp}[Constrained quantum error correction]
Here we demonstrate how  the state no-purification bounds can be used to find limits on quantum error correction (QEC).  In particular, we consider the broadly important situations where the QEC procedures are subject to certain constraints (such as stabilizer or Clifford constraints, symmetries) so that resource theory becomes useful.  Notice that the  decoding procedures are aimed at recovering all logical states from noisy physical states, indicating connections between the no-purification bounds and the overall recovery accuracy.  More specifically, we have the general result ($L, S$ denote the logical and physical systems, respectively):
\begin{corollary}
  Suppose the decoding operation is free.  Then given encoding operation $\cE_{L\to S}$ and noise channel $\cN_S$ acting on the physical system $S$, the error of the recovery of pure logical state $\psi_L$ obeys  $\varepsilon\geq{\Gamma_{\cN_S \circ \cE_{L\to S}({\psi_L})}}(1-f_{\psi_L})$, based on which we directly obtain bounds on measures of the overall accuracy of the code, such as the worst-case error given by maximization over $\psi_L$, and the average-case  error given by a certain (e.g.~Haar) average over $\psi_L$.
\label{cor:qec state}
\end{corollary}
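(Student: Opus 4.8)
The plan is to reduce Corollary~\ref{cor:qec state} directly to Theorem~\ref{thm: nogo deterministic state transformation} by composing the pieces of the QEC protocol into a single free operation. The key observation is that the error-correction scenario is literally an instance of state purification: the logical state $\psi_L$ is encoded, corrupted by noise, and then one attempts to recover the pure state $\psi_L$ using the decoder. First I would fix a pure logical state $\psi_L$ and write the input to the decoding step as $\eta := \cN_S\circ\cE_{L\to S}(\psi_L)$, a (generally mixed) state on $S$. The recovery error of $\psi_L$ is then by definition $\varepsilon = \varepsilon(\eta \xrightarrow{\cD} \psi_L) = 1 - \tr\psi_L \cD(\eta)$, where $\cD$ is the decoding operation, assumed free by hypothesis. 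Applying Theorem~\ref{thm: nogo deterministic state transformation} with input state $\eta$, free operation $\cD$, and pure target $\psi_L$ immediately gives $\varepsilon \geq \lfc_\eta(1-f_{\psi_L}) = \Gamma_{\cN_S\circ\cE_{L\to S}(\psi_L)}(1-f_{\psi_L})$, which is exactly the claimed bound.

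For the second half of the statement—bounds on worst-case and average-case code error—I would simply note that any notion of overall recovery accuracy is a functional of the family $\{\varepsilon(\psi_L)\}_{\psi_L}$ of per-state errors. The worst-case error is $\varepsilon_{\mathrm{wc}} = \max_{\psi_L}\varepsilon(\psi_L) \geq \max_{\psi_L}\Gamma_{\cN_S\circ\cE_{L\to S}(\psi_L)}(1-f_{\psi_L})$, and for any probability measure $\mu$ on pure logical states, $\varepsilon_{\mathrm{avg}} = \int \varepsilon(\psi_L)\,d\mu \geq \int \Gamma_{\cN_S\circ\cE_{L\to S}(\psi_L)}(1-f_{\psi_L})\,d\mu$, since the per-state inequality holds pointwise. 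In the typical symmetric situation where $f_{\psi_L}$ is constant over the logical space (or one uses $\max_{\psi_L} f_{\psi_L}$ as a uniform upper bound), these simplify to a single factor times $\max$ or average of the free component of the corrupted encoded state.

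The proof is essentially a matter of recognizing the right specialization, so I do not expect a genuine obstacle; the only point requiring a little care is making sure the decoding map is applied to the full physical system and that ``free'' is interpreted with respect to the resource theory on $S$ (with the logical space embedded appropriately), so that the golden rule indeed applies to $\cD$ and yields $\cD(\sigma)\in\cF$ for the free state $\sigma$ appearing in the decomposition of $\eta$. One should also flag that the bound is only informative when $\psi_L \notin \cF$ (so $f_{\psi_L}<1$) and when the noise-plus-encoding leaves a nonvanishing free component, i.e.\ $\Gamma_{\cN_S\circ\cE_{L\to S}(\psi_L)}>0$; by Proposition~\ref{prop:no-go conditions} the latter holds precisely when the corrupted encoded state has some free state in its support, which is the generic situation once the noise is sufficiently strong or of full rank.
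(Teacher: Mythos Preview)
Your proposal is correct and matches the paper's intended argument: the corollary is stated without an explicit proof precisely because it is an immediate specialization of Theorem~\ref{thm: nogo deterministic state transformation}, with the decoder $\cD$ as the free operation, the corrupted encoded state $\cN_S\circ\cE_{L\to S}(\psi_L)$ as the input, and $\psi_L$ as the pure target. Your remarks on worst-case and average-case error and on the nontriviality conditions are also in line with the paper's presentation.
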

We further remark on the case of covariant (symmetry-constrained) codes, which play fundamental roles in quantum computing and physics and has drawn considerable recent interest \cite{hayden2017error,faist2019continuous,woods2019continuous,kubica2020using,ZhouLiuJiang20,YangMo20:covariant,KongLiu}.  Suppose we consider some compact continuous symmetry group $G$.  Based on Lemma 2 in  Ref.~\cite{ZhouLiuJiang20}\footnote{Replace $\frac{1}{\tau}\int_0^\tau d\theta$ by the integration over the Haar measure on $G$.}, it can be seen that when the noise channel $\cN_S$ is covariant (which is usually the case), then we can construct a covariant decoding operation that achieves the optimal error. That is, we can actually remove the freeness assumption of the decoder to apply the no-purification bounds, leading to the following adapted version: 
\begin{corollary}[Covariant code]
Let $G$ be a compact continuous symmetry group.   Let $\cE_{L\to S}$ be a $G$-covariant encoding operation.  Suppose the noise channel $\cN_S$ is $G$-covariant. Then Corollary~\ref{cor:qec state} (where the parameters are defined in terms of the $G$-asymmetry theory) holds for any decoder.
\label{cor:covariant state}
\end{corollary}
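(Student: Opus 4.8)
The plan is to show that, under the covariance hypotheses, an arbitrary decoder can be symmetrized into a $G$-covariant one without hurting the overall-accuracy figures of merit, so that the bounds of Corollary~\ref{cor:qec state}—which assume a free (i.e.\ $G$-covariant) decoder—transfer to \emph{any} decoder.

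Given an arbitrary decoding channel $\cD_{S\to L}$, I would introduce its $G$-twirl
\begin{equation}
  \cD^G(\,\cdot\,):=\int_G dg\;U_L(g)^\dagger\,\cD\!\big(U_S(g)\,(\,\cdot\,)\,U_S(g)^\dagger\big)\,U_L(g),
\end{equation}
where $dg$ is the normalized Haar measure on the compact group $G$ and $U_S,U_L$ are the relevant unitary representations. First I would check that $\cD^G$ is a bona fide channel (a Haar average of the CPTP maps $X\mapsto U_L(g)^\dagger\cD(U_S(g)XU_S(g)^\dagger)U_L(g)$) and that it is $G$-covariant, by the change of variables $g\mapsto gh$ in the integral; hence $\cD^G$ maps $G$-invariant states on $S$ to $G$-invariant states on $L$, i.e.\ it is a free operation in the $G$-asymmetry theory, so Corollary~\ref{cor:qec state} applies to it. Next, using that $\cE$ and $\cN_S$ are $G$-covariant—$\cE(U_L(g)XU_L(g)^\dagger)=U_S(g)\cE(X)U_S(g)^\dagger$ and $\cN_S(U_S(g)YU_S(g)^\dagger)=U_S(g)\cN_S(Y)U_S(g)^\dagger$—I would establish the intertwining identity
\begin{equation}
  \big(\cD^G\circ\cN_S\circ\cE\big)(X)=\int_G dg\;U_L(g)^\dagger\,\big(\cD\circ\cN_S\circ\cE\big)\!\big(U_L(g)XU_L(g)^\dagger\big)\,U_L(g),
\end{equation}
i.e.\ twirling the decoder is the same as twirling the whole recovery map $\cR:=\cD\circ\cN_S\circ\cE$ over the logical-side representation. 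This is the compact-group version of Lemma 2 of Ref.~\cite{ZhouLiuJiang20} (with the footnote's substitution of Haar integration for $\frac{1}{\tau}\int_0^\tau d\theta$), and it is the technical crux.

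Granting this identity, for any pure logical state $\psi_L$ and $\psi_L^g:=U_L(g)\psi_LU_L(g)^\dagger$ one obtains
\begin{equation}
  \varepsilon\big(\psi_L\xrightarrow{\cD^G\circ\cN_S\circ\cE}\psi_L\big)=\int_G dg\;\varepsilon\big(\psi_L^g\xrightarrow{\cR}\psi_L^g\big).
\end{equation}
Therefore the worst-case error (maximum over $\psi_L$) of $\cD^G$ is at most that of $\cD$, since an average is at most a maximum; and, because the uniform/Haar measure on pure logical states is invariant under $\{U_L(g)\}$, the Haar-averaged error of $\cD^G$ equals that of $\cD$. In either case $\cD^G$ is no worse than $\cD$. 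Since $\cD^G$ is free, Corollary~\ref{cor:qec state} (equivalently Theorem~\ref{thm: nogo deterministic state transformation}) lower-bounds its worst-case, resp.\ Haar-averaged, error by $\max_{\psi_L}\Gamma_{\cN_S\circ\cE(\psi_L)}(1-f_{\psi_L})$, resp.\ the Haar average of $\Gamma_{\cN_S\circ\cE(\psi_L)}(1-f_{\psi_L})$, with $\Gamma$ and $f$ evaluated in the $G$-asymmetry theory (on $S$ and on $L$ respectively); combining the two inequalities yields the same overall-accuracy bounds for the arbitrary decoder $\cD$. I expect the main obstacle to be the intertwining step—verifying that the decoder-side twirl, composed with the covariant $\cE$ and $\cN_S$, reproduces precisely the logical-side twirl of $\cR$, which is exactly where both covariance assumptions are needed—together with the bookkeeping that the particular overall-accuracy measure to which the bound is transferred behaves monotonically (worst case), resp.\ is invariant (Haar average), under the orbit averaging.
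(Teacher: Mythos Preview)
Your proposal is correct and matches the paper's approach: the paper does not give an explicit proof but simply invokes Lemma~2 of Ref.~\cite{ZhouLiuJiang20} (with the footnote's Haar-integration substitution), which is precisely the decoder-twirling argument you spell out. Your derivation of the intertwining identity and the subsequent comparison of worst-case and Haar-averaged errors is the content of that lemma, so you have essentially reconstructed the intended argument in full detail.
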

See Sec.~\ref{sec:qec} for related discussions and results in the channel setting.
\end{exmp}

\new{
\begin{exmp}[Continuous variable]
Lastly, we provide an elementary example of the application to continuous-variable theories.  Consider  continuous-variable nonclassicality, a characteristic resource feature  in quantum optics that is closely relevant to, e.g.,~linear optical quantum computation \cite{Knill2001} and metrology \cite{SCHNABEL17:squeezed,Yadin18:nonclassicality,PhysRevLett.122.040503}.  Here the coherent states of light 
and their probabilistic mixtures are considered free (classical). The coherent state corresponding to complex amplitude $\alpha\in\mathbb{C}$ takes the form 
\begin{equation}
    |\alpha\rangle=\exp \left(-|\alpha|^{2}/2\right) \sum_{n=0}^{\infty} \frac{\alpha^{n}}{\sqrt{n !}}|n\rangle
\end{equation}
in the number state (Fock) basis $\{\ket{n}\}$.
A prototypical type of nonclassical resource states is the (single-mode) squeezed states \cite{PhysRevA.13.2226,rmp12:gaussian}
\begin{equation}
    |s_r\rangle:=\frac{1}{\sqrt{\cosh r}} \sum_{n=0}^{\infty} \frac{\sqrt{(2 n) !}}{2^{n} n !} (\tanh r)^n|2 n\rangle
\end{equation}
generated by the squeezing operator
$S(r):=\exp \left[r\left(\hat{a}^{2}-(\hat{a}^{\dagger})^2\right) / 2\right]$ ($\hat{a}$ and $\hat{a}^{\dagger}$ are, respectively, the annihilation and creation operators)
acting on the vacuum state $\ket{0}$, where  $r\geq 0$ is the squeezing parameter. It can be calculated that
\begin{align}
    \langle s_r|\alpha\rangle &=  \sqrt{\frac{e^{-|\alpha|^{2}}}{{\cosh r}}}\sum_{n=0}^{\infty} \frac{\sqrt{(2 n) !}}{2^{n} n !} (\tanh r)^n\cdot\frac{\alpha^{2n}}{\sqrt{(2n)!}}\\
    &= \sqrt{\frac{e^{-|\alpha|^{2}}}{{\cosh r}}}\sum_{n=0}^{\infty} \frac{[(\alpha^2 \tanh r)/2]^{n} }{n !}\\
    &=\sqrt{\frac{e^{-|\alpha|^{2}}}{{\cosh r}}}e^{(\alpha^2 \tanh r)/2},
\end{align}
using which we obtain
\begin{align}
    f_{s_r} &:= \sup_{\sigma \in \cF}  \bra{s_r}\sigma\ket{s_r} = \sup_{\alpha \in \mathbb{C}}|\langle s_r|\alpha\rangle|^2 \\
    & = \sup_{\alpha \in \mathbb{C}} \frac{e^{-|\alpha|^{2}}}{{\cosh r}} e^{\mathrm{Re}(\alpha^2) {\tanh r}} \\
    & = (\cosh r)^{-1},
\end{align}
where we used $\tanh r < 1$.
Then, to showcase an example of a no-purification bound,  consider the task of distilling some squeezed state $|s_r\rangle$ from noisy state $\rho$ using free, namely classicality-preserving operations (which, in particular, include passive linear optical operations) \cite{Yadin18:nonclassicality,lami18:gaussianrt}. Then  Theorem~\ref{thm: nogo deterministic state transformation} directly implies that the transformation error $\varepsilon \geq \lfc_\rho[1- (\cosh r)^{-1}]$, from which it can be observed that the task indeed becomes more demanding as the squeezing parameter increases.  Like  the discrete-variable setting, for specific noise models, it is often easy to calculate or bound $\lfc_\rho$ so that the error bound can be further specified. 
 
\end{exmp}
}


\section{Channel theory}\label{sec:channel}


We now extend the no-purification theory to quantum channels or dynamical settings. 
The channel analog of purification is to transform a noisy channel (or noisy channels, as will be discussed) to a unitary (noiseless) channel, or equivalently, to simulate the unitary channel by the noisy ones.  
The free component approach directly enables us to study these problems in the channel resource theory setting where the resource objects are quantum channels instead of states (note that it is not clear how to fully extend the hypothesis testing approach in Ref.~\cite{FangLiu20} to channels). It is worth noting again that the structure of channel theories is much richer than the state theories since multiple instances of channels can be used in different ways, such as in parallel, sequentially, or adaptively.
Here, we first present error bounds in the most general forms, and then specifically investigate the adaptive or sequential simulation setting, which represents a fundamental difference from state theories.  
To demonstrate the practical relevance of the general no-go rules and bounds, we discuss them in more specific contexts, and, in particular, outline the applications to quantum error  correction, gate and circuit synthesis, and channel capacities.   

Note that we often specify the input and output systems of channels in the subscripts (a channel $\cN$ from system $A$ to system $B$ is denoted as $\cN_{A\rightarrow B}$, and if the input and output systems are the same one $A$ it is simply denoted as $\cN_A$), but when there is no ambiguity we shall omit the labels. Given linear maps $\cN,\cM$, the order $\cN - \cM \geq 0$ means $\cN - \cM$ is a completely positive map.  To simplify the notation,  given some input state $\rho$ on $A$ and reference system $R$, we will also denote the output state of the channel $\cN_{A\rightarrow B}$ acting on $A$ by 
\begin{align}\label{eq:outputstate}
\rho_\cN := \cN_{A\to B}\otimes\id_R(\rho_{AR}).
\end{align}
In particular, the Choi state of $\cN$ is given by 
\begin{align}\label{eq:choistate}
\Phi_\cN:=\cN_{A\to B}\otimes\id_R(\Phi_{AR}),
\end{align}
where $\Phi_{AR} = \sum_j\ket{j}_A\ket{j}_R/\sqrt{d}$ is the maximally entangled state between $A$ and reference system $R$ of the same dimension $d$.

\subsection{General theory and results}\label{sec:channel general theory}
\subsubsection{Setups and basic error bounds}

For channel resource theories, the building blocks analogous to free states and free operations are \emph{free channels} and \emph{free superchannels}, where superchannels map channels to channels. 
\new{Like the state case, we consider the most general framework where the free superchannels are required only to obey the  \emph{golden rule} that any free superchannel must map a free channel to another free channel. Note again that this golden rule gives rise to the largest possible set of superchannels that encompasses any legitimate set of free superchannels, so that the fundamental limits induced by it apply universally.  We also assume the following two commonly held properties of the set of free channels (which we still denote by $\cF$):  (i) The composition of free channels (for channels there are two fundamental types of composition---parallel composition (represented by tensor product $\ox$), and sequential composition (represented by $\circ$)) should be free, that is, if $\cN_1, \cN_2 \in \cF$, then both $\cN_1\ox \cN_2 \in \cF$ and $\cN_2 \circ \cN_1\in \cF$ hold; (ii) $\cF$ is closed.} 
We refer readers to e.g.~Refs.~\cite{LiuWinter19,gour2019entanglement} for more comprehensive discussions of the general framework of channel resource theories.

We now define the channel version of free component as follows.
\begin{definition}[Channel free component]
The \emph{free component} of quantum channel $\cN$ is defined as
\begin{align}
    \lfc_\cN:= \max \big\{\gamma : \cN - \gamma \cM \geq 0, \ \cM\in\cF\big\}.
\end{align}
\end{definition}
Equivalently, 
\begin{align}
\lfc_\cN = \max \big\{\gamma : \cN = \gamma \cM + & (1-\gamma)\cR,\; \cM\in\cF,\cR \in \mathscr{C}\big\},
\end{align}
where $\mathscr{C}$ is the set of all completely positive and trace-preserving maps (quantum channels). Since $\cN \geq \gamma\cM$ is  equivalent to $\Phi_\cN  \geq \gamma\Phi_\cM$, we also have the relation
\begin{align}
\lfc_\cN = \lfc_{\Phi_\cN},
\end{align}
where on the RHS, $\Phi_\cN$ is the Choi state of $\cN$ and the free component $\Gamma$ is defined with respect to the set of free states consisting of the Choi states of all free channels.
Similar to the state case, 
as long as $\cF$ can be characterized by semidefinite conditions,  the channel free component $\lfc_{\cN}$ can be efficiently computed by SDP.

The channel free component also exhibits  monotonicity and super-multiplicity properties.

\begin{proposition}[Monotonicity]\label{lem: lfc channel monotonicity}
For any quantum channel $\cN$ and free superchannel $\Pi$, it holds that 
\begin{align}
\lfc_\cN \leq \lfc_{\Pi(\cN)}. 
\end{align}
\end{proposition}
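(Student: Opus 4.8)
The plan is to mirror the proof of Proposition~\ref{lem: lfc monotone state} (the state version), since the channel free component is defined by the same completely-positive ordering and the golden rule for superchannels plays exactly the role that the golden rule for operations played there. First I would invoke the definition: let $\gamma = \lfc_\cN$ be achieved by some free channel $\cM \in \cF$, so that $\cN - \gamma \cM \geq 0$ as a completely positive map. The key observation is that a superchannel $\Pi$, being a physically realizable transformation on channels, acts linearly and preserves the completely-positive ordering: if $\cN - \gamma \cM \geq 0$ then $\Pi(\cN) - \gamma \Pi(\cM) \geq 0$. Combined with the golden rule, $\Pi(\cM) \in \cF$ because $\cM \in \cF$ and $\Pi$ is free, so $\Pi(\cN)$ admits the decomposition witnessing $\lfc_{\Pi(\cN)} \geq \gamma = \lfc_\cN$.

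The two facts I would need to justify carefully are (i) linearity of $\Pi$ and (ii) that $\Pi$ maps completely positive maps to completely positive maps (equivalently, is represented by a completely positive ``supermap'' on the level of Choi operators). Both are standard consequences of the structure theorem for superchannels: every superchannel $\Pi$ can be realized as $\Pi(\cN) = \cV_{\mathrm{post}} \circ (\cN \otimes \id_E) \circ \cV_{\mathrm{pre}}$ for some fixed pre- and post-processing channels $\cV_{\mathrm{pre}}, \cV_{\mathrm{post}}$ and ancilla $E$ (Chiribella--D'Ariano--Perinotti). From this realization, linearity in $\cN$ is immediate, and if $\cN - \gamma\cM$ is completely positive then so is $(\cN - \gamma\cM)\otimes\id_E$, and composing on both sides with the fixed channels $\cV_{\mathrm{pre}}, \cV_{\mathrm{post}}$ preserves complete positivity; hence $\Pi(\cN) - \gamma\Pi(\cM) \geq 0$. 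One could alternatively argue purely at the Choi-state level, using $\lfc_\cN = \lfc_{\Phi_\cN}$ established above and the fact that a superchannel induces a CPTP (or at least CP) map on Choi states, then cite Proposition~\ref{lem: lfc monotone state}.

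I do not expect a genuine obstacle here; the proof is essentially a two-line argument. The only mild subtlety worth a sentence in the write-up is making explicit \emph{why} a superchannel preserves the CP ordering $\geq$ used in the definition of $\lfc$ --- this is where the semidefinite/Choi structure enters --- but this is routine given the realization theorem. So the write-up would be: fix the optimal free $\cM$ with $\cN - \lfc_\cN\cM \geq 0$; apply $\Pi$ and use linearity plus CP-preservation to get $\Pi(\cN) - \lfc_\cN\,\Pi(\cM) \geq 0$; note $\Pi(\cM)\in\cF$ by the golden rule; conclude $\lfc_{\Pi(\cN)} \geq \lfc_\cN$ by definition.
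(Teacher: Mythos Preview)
Your proposal is correct and follows essentially the same approach as the paper's proof: pick the optimal free $\cM$, apply $\Pi$ to the inequality $\cN - \lfc_\cN\cM \geq 0$, invoke the golden rule to get $\Pi(\cM)\in\cF$, and conclude. The only difference is that you spell out the justification for why $\Pi$ preserves the CP ordering (via the pre/post-processing realization of superchannels), whereas the paper simply asserts $\Pi(\cN) - \lfc_\cN\Pi(\cM) \geq 0$ without comment.
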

\begin{proof}
Suppose $\lfc_\cN$ is achieved by $\cM \in \cF$. Then by definition $\cN - \lfc_\cN \cM \geq 0$, and thus $\Pi(\cN) - \lfc_\cN\Pi(\cM) \geq 0$. Since $\Pi(\cM) \in \cF$ by the golden rule, we have that $\lfc_{\Pi(\cN)} \geq \lfc_\cN$ by definition.
\end{proof}

For channels, we need to consider sequential composition in addition to parallel composition represented by tensor product. The channel free component is super-multiplicative under both types of composition.
\begin{proposition}[Super-multiplicity]\label{prop: lfc channel super multiplicity}
For any quantum channels $\cN_1,\cN_2$, it holds that 
\begin{align}
\lfc_{\cN_1\ox \cN_2} & \geq \lfc_{\cN_1}  \lfc_{\cN_2}\\
\lfc_{\cN_2\circ \cN_1} & \geq \lfc_{\cN_1} \lfc_{\cN_2}. 
\end{align}
\end{proposition}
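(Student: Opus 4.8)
The plan is to run the same argument as in the proof of Proposition~\ref{prop: lfc state super multiplicity}, but now at the level of completely positive maps, using the two facts that both the tensor product and the composition of CP maps are again CP (and that a nonnegative scalar multiple of a CP map is CP). First I would fix optimal decompositions for the two factors: let $\cM_1,\cM_2\in\cF$ be free channels achieving $\lfc_{\cN_1},\lfc_{\cN_2}$, so that $\cN_i-\lfc_{\cN_i}\cM_i\geq 0$ for $i=1,2$, i.e.\ each difference is a CP map.

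For the parallel case I would exhibit the telescoping identity
\[
\cN_1\ox\cN_2-\lfc_{\cN_1}\lfc_{\cN_2}\,\cM_1\ox\cM_2=(\cN_1-\lfc_{\cN_1}\cM_1)\ox\cN_2+\lfc_{\cN_1}\,\cM_1\ox(\cN_2-\lfc_{\cN_2}\cM_2).
\]
Each summand on the right is a tensor product of CP maps (times the nonnegative scalar $\lfc_{\cN_1}$ in the second term), hence CP; since a sum of CP maps is CP, the left-hand side is $\geq 0$. As $\cM_1\ox\cM_2\in\cF$ by the composition axiom (property (i) for free channels), the definition of $\lfc$ gives $\lfc_{\cN_1\ox\cN_2}\geq\lfc_{\cN_1}\lfc_{\cN_2}$. (Alternatively, for this case one may simply invoke $\lfc_\cN=\lfc_{\Phi_\cN}$, the factorization $\Phi_{\cN_1\ox\cN_2}=\Phi_{\cN_1}\ox\Phi_{\cN_2}$, and Proposition~\ref{prop: lfc state super multiplicity}; but I prefer to give the direct argument since it also covers the sequential case, which has no such Choi shortcut.)

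For the sequential case I would use the analogous identity
\[
\cN_2\circ\cN_1-\lfc_{\cN_1}\lfc_{\cN_2}\,\cM_2\circ\cM_1=(\cN_2-\lfc_{\cN_2}\cM_2)\circ\cN_1+\lfc_{\cN_2}\,\cM_2\circ(\cN_1-\lfc_{\cN_1}\cM_1),
\]
where now I use that a composition of CP maps is CP (and scaling by $\lfc_{\cN_2}\geq 0$ preserves this), so again the right-hand side, hence the left-hand side, is $\geq 0$; since $\cM_2\circ\cM_1\in\cF$ by property (i), we conclude $\lfc_{\cN_2\circ\cN_1}\geq\lfc_{\cN_1}\lfc_{\cN_2}$. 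There is no deep obstacle here: the only points requiring care are (a) checking that the rearranged expressions really are CP, which reduces entirely to the standard closure of CP maps under $\ox$, under $\circ$, and under multiplication by nonnegative scalars, and (b) keeping the input/output systems consistent in the sequential identity so that the intermediate map $\cN_2-\lfc_{\cN_2}\cM_2$ acts on the correct system and both $\cN_2\circ\cN_1$ and $\cM_2\circ\cM_1$ are well defined.
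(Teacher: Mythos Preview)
Your proof is correct and follows essentially the same approach as the paper: fix optimal free channels $\cM_1,\cM_2$, show that the CP order $\cN_i\geq\lfc_{\cN_i}\cM_i$ is preserved under $\ox$ and $\circ$, and invoke $\cM_1\ox\cM_2,\cM_2\circ\cM_1\in\cF$. The only difference is presentational: the paper simply asserts $\cN_1\ox\cN_2\geq\lfc_{\cN_1}\lfc_{\cN_2}\,\cM_1\ox\cM_2$ (and likewise for $\circ$) as a one-line fact about the CP order, whereas you spell this out via explicit telescoping identities, which is a perfectly good way to justify that step.
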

\begin{proof}
Suppose that the maximization in $\lfc_{\cN_1}, \lfc_{\cN_2}$ are, respectively, achieved by $\cM_1, \cM_2\in\cF$, that is, $\cN_i \geq \new{\lfc_{\cN_i}} \cM_i, i=1,2$. It holds that $\cN_1 \ox \cN_2 \geq (\lfc_{\cN_1}\cM_1) \ox (\lfc_{\cN_2}\cM_2) = \lfc_{\cN_1}\lfc_{\cN_2}\cM_1\ox\cM_2$, and similarly, $\cN_2 \circ \cN_1 \geq (\lfc_{\cN_2}\cM_2) \circ (\lfc_{\cN_1}\cM_1) = \lfc_{\cN_1}\lfc_{\cN_2}\cM_2\circ\cM_1$.  Also note that $\cM_1 \ox \cM_2\in\cF$ and $\cM_1 \circ \cM_2\in\cF$ axiomatically.  Therefore,  $\lfc_{\cN_1\ox \cN_2} \geq \lfc_{\cN_1}  \lfc_{\cN_2}$ and $\lfc_{\cN_2\circ \cN_1} \geq \lfc_{\cN_1} \lfc_{\cN_2}$ by definition.  
\end{proof}


Here we are interested in the channel simulation task of transforming a given quantum channel $\cN$ to a target unitary channel $\cU$ via some superchannel up to some error that is measured by certain choices of channel distances. Let $F(\rho,\sigma) = \|\sqrt{\rho}\sqrt{\sigma}\|_1^2$ be the Uhlmann fidelity between general states $\rho$ and $\sigma$. Consider the following three typical versions of channel fidelity that are commonly used.  
\begin{itemize}
\item Worst-case (entanglement) fidelity:
\begin{align}
F_{\wst}(\cN,\cM):=\inf_{\rho_{AR}} F(\rho_\cN, \rho_\cM),
\end{align}
where $\rho_\cN,\rho_\cM$ are, respectively, the channel output states of $\cN,\cM$ defined in Eq.~(\ref{eq:outputstate}),  and the optimization includes system $R$.  Note that it is equivalent to optimize over pure input states due to the joint concavity of fidelity $F$ \cite{wilde_2013}.  
\item Choi fidelity:  
\begin{align}
F_{\cho}(\cN,\cM) :=F(\Phi_\cN, \Phi_\cM)
\end{align}
where $\Phi_\cN, \Phi_\cM$ are, respectively, the Choi states of $\cN,\cM$.
\item Average-case fidelity \cite{gilchrist2005distance}:
\begin{align}
F_{\ave}(\cN,\cM):=\int d \psi~ F(\cN(\psi),\cM(\psi)),
\end{align}
where the  integral is over the Haar measure on the input state space.
\end{itemize}
The corresponding versions of infidelity are then
\begin{align}
    \varepsilon_{x}(\cN, \cM) := 1- F_{x}(\cN,\cM),\quad x \in \{\wst,\cho,\ave\}.
\end{align}
Also, a standard measure of distance between channels is given by the diamond norm distance: 
\begin{align}
\varepsilon_\diamond(\cN, \cM):= \frac{1}{2}\diamnorm{\cN - \cM},
\end{align}
where $\diamnorm{\cN} := \sup_{\rho_{AR}}\|\cN_{A\to B}\otimes\id_R(\rho_{AR})\|_1$. Again, it is equivalent to optimize over pure input states due to the convexity of trace norm $\|\cdot\|_1$.
All the above channel distance measures are symmetric in its arguments.

Note that these channel distance measures are commonly used in different scenarios~\cite{gilchrist2005distance}. For example, the worst-case entanglement fidelity and the diamond norm error are commonly used in quantum computation scenarios like circuit synthesis (see, e.g.,~Refs.~\cite{Kitaev_1997,10.5555/2011679.2011685}, Sec.~\ref{sec:synthesis}) and approximate quantum error correction (see, e.g.,Ref.~\cite{leung1997approximate}, Sec.~\ref{sec:qec}); the Choi fidelity  is used in quantum Shannon theory to evaluate the performance of quantum communication (see, e.g.,~Refs.~\cite{tomamichel2016quantum,wang2018semidefinite}, Sec.~\ref{sec:capacities});  the average-case fidelity is easier to estimate in experiments (see, e.g.,Refs.~\cite{bowdrey2002fidelity,chow2009randomized,flammia2011direct,lu2015experimental}).

In this work, we are mostly interested in the case where an argument is a unitary channel $\cU$.
Note that for pure state $\psi$, we have the inequality  \cite{nielsen2010quantum}
\begin{align}
     1-F(\rho,\psi) \leq \frac{1}{2}\|\rho-\psi\|_1 \leq \sqrt{1-F(\rho,\psi)}.
\end{align}
Applying the above result to channels, we can conclude 
\begin{align}
    \varepsilon_\wst(\cN, \cU) \leq \varepsilon_\diamond(\cN, \cU) \leq \sqrt{\varepsilon_\wst(\cN, \cU)}.  \label{eq:diamond_vs_worst}
\end{align}
Also, it is known~\cite{1998quant.ph..7091H,gilchrist2005distance} that the average-case fidelity and the Choi fidelity have the following direct relation: 
\begin{align}\label{eq: average case Choi state case equivalent}
	F_{\ave}(\cN,\cU) = \frac{F_{\cho}(\cN,\cU) d + 1}{d+1},
\end{align}
and thus
\begin{align}
    \varepsilon_\ave(\cN, \cU)  = \frac{d}{d+1}\varepsilon_\cho(\cN, \cU),
\end{align}
where $d$ is the dimension of the input system. 
Furthermore, it is clear from definition that 
\begin{align}
 F_\cho(\cN, \cM) \geq F_\wst(\cN, \cM),\\
    \varepsilon_\cho(\cN, \cM) \leq \varepsilon_\wst(\cN, \cM),
\end{align}
for any channels $\cN, \cM$.
To summarize, for the case  of comparing with unitary channel $\cU$ which is of interest in this work,  the four channel distance measures are ordered as follows:  
\begin{equation}
    \varepsilon_\diamond(\cN, \cU) \geq \varepsilon_\wst(\cN, \cU) \geq \varepsilon_\cho(\cN, \cU) = \frac{d+1}{d}\varepsilon_\ave(\cN, \cU).
    \label{eq:ordering}
\end{equation}

We are interested in the task of using channel $\cN$ to simulate unitary target channel $\cU$ via transformation superchannel $\Pi$.
The (different versions of) simulation error is simply given by 
\begin{align}
    \varepsilon_{x}(\cN\xrightarrow{\Pi}\cU) := \varepsilon_{x}(\Pi(\cN),\cU),\quad x \in \{\wst,\cho,\ave,\diamond\}.
\end{align}
Also define corresponding versions of the maximum overlap of channel $\cN$ with free channels as
\begin{align}
	f^{x}_\cN := \max_{\cM \in \cF} F_{x}(\cN,\cM),\quad x \in \{\wst,\cho,\ave\}.
\end{align}

Note the following simple fact.
\begin{proposition}[Faithfulness]\label{rmk: fidelity faithfulness}
 For any quantum channels $\cN$ and $\cM$, for $x \in \{\wst,\cho,\ave\}$,
\begin{align}
F_{x}(\cN,\cM)= 1 \iff \cN=\cM,
\end{align}
and as a consequence,
\begin{align}
    f^{x}_{\cN}= 1 \iff \cN \in \cF.
\end{align}
\end{proposition}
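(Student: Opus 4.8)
The plan is to deduce all three equivalences from the faithfulness of the Uhlmann fidelity on states, namely $F(\rho,\sigma)=1$ iff $\rho=\sigma$ (a standard property). In each case the implication $\cN=\cM\Rightarrow F_x(\cN,\cM)=1$ is trivial, since the fidelity of any state with itself is $1$ and hence so is any average or infimum of such; so all the content lies in the forward direction.

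For the Choi fidelity this is immediate: $F_\cho(\cN,\cM)=F(\Phi_\cN,\Phi_\cM)=1$ forces $\Phi_\cN=\Phi_\cM$ by state faithfulness, and the \Choi isomorphism then gives $\cN=\cM$. For the worst-case fidelity I would observe that the maximally entangled input $\Phi_{AR}$ is one admissible choice in the defining infimum, so $F_\wst(\cN,\cM)\le F(\Phi_\cN,\Phi_\cM)$; hence $F_\wst(\cN,\cM)=1$ implies $F(\Phi_\cN,\Phi_\cM)=1$, and we are back to the Choi case (equivalently, invoke $F_\cho\ge F_\wst$ as recorded above).

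The average-case identity is the only step I expect to need a genuine argument: the Horodecki--Nielsen relation $F_\ave(\cN,\cU)=\tfrac{F_\cho(\cN,\cU)\,d+1}{d+1}$ quoted earlier holds only when the second argument is unitary, so I would instead argue by continuity. The map $\psi\mapsto F(\cN(\psi),\cM(\psi))$ is continuous (it is $F$ composed with the linear, hence continuous, maps $\cN,\cM$) and bounded above by $1$, while the Haar measure on pure states has full support; therefore $\int d\psi\,F(\cN(\psi),\cM(\psi))=1$ forces $F(\cN(\psi),\cM(\psi))=1$ for every pure $\psi$, i.e.~$\cN(\psi)=\cM(\psi)$ for all pure $\psi$. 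Since the rank-one projectors span the operator space, linearity of $\cN$ and $\cM$ then yields $\cN=\cM$. This continuity/full-support argument is the main (and only mildly subtle) obstacle.

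For the consequence, the implication $\cN\in\cF\Rightarrow f^x_\cN=1$ follows by taking $\cM=\cN$ in the maximization defining $f^x_\cN$. Conversely, if $f^x_\cN=1$, then since $\cF$ is closed, the channel set compact, and $F_x(\cN,\cdot)$ continuous, the maximum is attained at some $\cM^\star\in\cF$ with $F_x(\cN,\cM^\star)=1$, so by the first part $\cN=\cM^\star\in\cF$. The one point to address carefully is attainment of this maximum---closedness of $\cF$, compactness of the channel set, and continuity of $F_x(\cN,\cdot)$ (clear for $\cho$; an infimum of a jointly continuous function over a compact domain for $\wst$; dominated convergence for $\ave$)---but this is routine.
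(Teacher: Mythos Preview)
Your proposal is correct and follows essentially the same approach as the paper: both reduce the first equivalence to the faithfulness of the state fidelity $F(\rho,\sigma)=1\iff\rho=\sigma$, and both invoke closedness of $\cF$ for the consequence. The paper's proof is a two-sentence sketch that does not spell out the three cases separately; your careful treatment of the average-case direction via continuity and full support of the Haar measure (rather than the Horodecki--Nielsen relation, which indeed requires a unitary argument) and your explicit attainment argument for the maximum in $f^x_\cN$ are details the paper leaves implicit, but the underlying idea is the same.
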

\begin{proof}
The first equivalence follows from the fact of state fidelity that $F(\rho,\sigma)=1$ if and only if $\rho=\sigma$. The second equivalence follows since $\cF$ is closed by assumption.
\end{proof}



We now present error bounds for these channel error measures.  
For the Choi and average-case fidelities, note the following linearity property.
\begin{lemma}[Linearity]
Let $x \in \{\cho,\ave\}$ and $\cU$ be a unitary channel. Then $F_x(\cN, \cU)$ is linear in $\cN$. That is, given $\cN = p\cN_1 + (1-p)\cN_2$ for $p\in[0,1]$ and quantum channels $\cN_1,\cN_2$, it holds that
\begin{align}
    F_x(\cN, \cU) = pF_x(\cN_1, \cU) + (1-p)F_x(\cN_2, \cU).
\end{align}
\label{lem:linearity}
\end{lemma}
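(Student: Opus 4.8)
The plan is to reduce, in both cases, the fidelity $F_x(\cN,\cU)$ to an expression that is manifestly linear in $\cN$, exploiting the elementary fact that the Uhlmann fidelity collapses to a linear functional whenever one of its arguments is pure: for any state $\rho$ and any pure state $\psi=\proj{\psi}$ one has $F(\rho,\psi)=\bra{\psi}\rho\ket{\psi}=\tr[\psi\rho]$, which is linear in $\rho$. The two structural facts that make this reduction available here are that a unitary channel $\cU$ has a \emph{pure} Choi state and that $\cU$ maps pure input states to pure output states.

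First I would handle $x=\cho$. Since $\Phi_\cU=\cU_{A\to B}\ox\id_R(\Phi_{AR})$ is a pure state, $F_\cho(\cN,\cU)=F(\Phi_\cN,\Phi_\cU)=\tr[\Phi_\cU\Phi_\cN]$. The Choi map $\cN\mapsto\Phi_\cN$ is linear, so for $\cN=p\cN_1+(1-p)\cN_2$ we get $\Phi_\cN=p\,\Phi_{\cN_1}+(1-p)\,\Phi_{\cN_2}$ and hence
\begin{align}
F_\cho(\cN,\cU)=p\,\tr[\Phi_\cU\Phi_{\cN_1}]+(1-p)\,\tr[\Phi_\cU\Phi_{\cN_2}]=p\,F_\cho(\cN_1,\cU)+(1-p)\,F_\cho(\cN_2,\cU),
\end{align}
as claimed. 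Next, for $x=\ave$, I would argue pointwise in the input state: for every pure $\psi$ the state $\cU(\psi)$ is pure, so $F(\cN(\psi),\cU(\psi))=\tr[\cU(\psi)\,\cN(\psi)]$, and since $\cN\mapsto\cN(\psi)$ is linear, $\cN=p\cN_1+(1-p)\cN_2$ gives $F(\cN(\psi),\cU(\psi))=p\,F(\cN_1(\psi),\cU(\psi))+(1-p)\,F(\cN_2(\psi),\cU(\psi))$; integrating this identity over the Haar measure on input states preserves it, yielding the linearity of $F_\ave(\cN,\cU)$. Alternatively, one could simply feed the just-proved linearity of $F_\cho$ into Eq.~\eqref{eq: average case Choi state case equivalent}, whose additive constant is exactly such that convex combinations carry through.

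There is essentially no real obstacle in this lemma --- the whole content is the pure-state reduction of $F$ combined with the linearity of the Choi correspondence and of channel evaluation. The one point worth flagging is \emph{why} the argument stops at these two measures: for $x=\wst$ (and likewise for $\varepsilon_\diamond$) the extra infimum over entangled input states does not commute with convex combinations of $\cN$, so $F_\wst(\cN,\cU)$ is only concave, not linear, in $\cN$; this is precisely the reason the general error bounds that follow must treat the worst-case and diamond-norm measures by a separate route rather than by direct linearity.
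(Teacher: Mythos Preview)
Your proof is correct and matches the paper's approach: for $x=\cho$ you use purity of $\Phi_\cU$ to reduce $F_\cho$ to $\tr[\Phi_\cU\Phi_\cN]$ and then invoke linearity of the Choi map, exactly as the paper does; for $x=\ave$ your ``alternative'' route via Eq.~\eqref{eq: average case Choi state case equivalent} is precisely the paper's argument. Your additional direct pointwise argument for $F_\ave$ (using that $\cU(\psi)$ is pure for each pure $\psi$) is a valid self-contained alternative that the paper does not spell out, and your closing remark on why $F_\wst$ fails to be linear is also in line with the paper's discussion after Theorem~\ref{thm: nogo choi state case}.
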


\begin{proof}
Consider the Choi fidelity first. We have 
\begin{align}
& F_{C}(p\cN_1 + (1-p)\cN_2, \cU) \notag\\
 &= F(\Phi_{p\cN_1+(1-p)\cN_2},\Phi_{\cU})\\
   &=\tr(\Phi_{p\cN_1+(1-p)\cN_2} \Phi_{\cU})  \\
   &= p\tr(\Phi_{\cN_1}\Phi_{\cU}) + (1-p)\tr(\Phi_{\cN_2}\Phi_{\cU})  \\
   &= p F_C(\cN_1 ,\cU) + (1-p)F_C(\cN_2 ,\cU),
\end{align}
where the second equality follows since the Choi state $\Phi_{\cU}$ is a pure state, and the third equality follows from the linearlity of the trace function.  Then due to Eq.~(\ref{eq: average case Choi state case equivalent}), we conclude that $F_A$ has the same linearity property.
\end{proof}
Collectively, our best bounds are the following.
\begin{theorem}\label{thm: nogo choi state case}
Given any quantum channel $\cN$ and any unitary target channel $\cU$,  it holds for any free superchannel $\Pi$ that 
\begin{align}
    \ve_\diamond(\cN\xrightarrow{\Pi}\cU)   &\geq  \ve_\wst(\cN\xrightarrow{\Pi}\cU) \nonumber\\ & \geq  \ve_\cho  (\cN\xrightarrow{\Pi}\cU)\geq \lfc_\cN(1-f^{\cho}_{\cU}),
    \label{eq:channel bound}
\end{align}
and 
\begin{align}
    &\ve_\ave(\cN\xrightarrow{\Pi}\cU) \geq \lfc_\cN(1-f^{\ave}_{\cU}) = \frac{d}{d+1}\lfc_\cN(1-f^{\cho}_{\cU}),
    \label{eq:channel bound ave}
\end{align}
where $d$ is the dimension of the input system of $\cU$.
\end{theorem}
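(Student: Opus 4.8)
The plan is to mirror the state-theory argument of Theorem~\ref{thm: nogo deterministic state transformation}, using the free component decomposition together with the linearity of $F_\cho$ and $F_\ave$ in their first argument (Lemma~\ref{lem:linearity}) and the ordering of the four channel distances already established in Eq.~\eqref{eq:ordering}. First I would note that the chain of inequalities $\ve_\diamond \geq \ve_\wst \geq \ve_\cho$ is immediate from Eq.~\eqref{eq:ordering} applied to the channel $\Pi(\cN)$ and the unitary $\cU$, and similarly the equality $\ve_\ave = \frac{d}{d+1}\ve_\cho$ transfers to the simulation setting by definition. So the entire content of the theorem reduces to proving the single inequality $\ve_\cho(\cN\xrightarrow{\Pi}\cU) \geq \lfc_\cN(1-f^\cho_\cU)$, and Eq.~\eqref{eq:channel bound ave} then follows by multiplying through by $d/(d+1)$ using $f^\ave_\cU = \frac{f^\cho_\cU d + 1}{d+1}$, so that $\frac{d}{d+1}(1-f^\cho_\cU) = 1 - f^\ave_\cU$.

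To prove the key inequality, by the definition of $\lfc_\cN$ we may write $\cN = \lfc_\cN\,\cM + (1-\lfc_\cN)\,\cR$ for some free channel $\cM\in\cF$ and some channel $\cR\in\mathscr{C}$. Since superchannels are linear maps on channels, $\Pi(\cN) = \lfc_\cN\,\Pi(\cM) + (1-\lfc_\cN)\,\Pi(\cR)$, and $\Pi(\cM)\in\cF$ by the golden rule. Now apply Lemma~\ref{lem:linearity} with the unitary $\cU$:
\begin{align}
F_\cho(\Pi(\cN),\cU) &= \lfc_\cN F_\cho(\Pi(\cM),\cU) + (1-\lfc_\cN) F_\cho(\Pi(\cR),\cU) \notag\\
&\leq \lfc_\cN f^\cho_\cU + (1-\lfc_\cN),
\end{align}
where the inequality uses $F_\cho(\Pi(\cM),\cU)\leq \max_{\cM'\in\cF}F_\cho(\cM',\cU) = f^\cho_\cU$ since $\Pi(\cM)\in\cF$, and $F_\cho(\Pi(\cR),\cU)\leq 1$. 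Rearranging gives $\ve_\cho(\cN\xrightarrow{\Pi}\cU) = 1 - F_\cho(\Pi(\cN),\cU) \geq \lfc_\cN(1-f^\cho_\cU)$, as desired.

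I do not expect any serious obstacle here: the argument is a direct lift of the state proof, and the only ingredient that genuinely differs from the state case is the linearity of the Choi and average-case fidelities against a fixed unitary (Lemma~\ref{lem:linearity}), which is exactly what makes the trick work and which has already been established. The one point that warrants a sentence of care is why linearity is available precisely for $\cU$ unitary --- it is because the Choi state $\Phi_\cU$ is pure, so $F(\cdot,\Phi_\cU)$ collapses to the linear functional $\tr(\,\cdot\,\Phi_\cU)$; this is the same phenomenon as $f_\psi$ being an overlap rather than a genuine fidelity in the state theorem. It is worth remarking explicitly that no such clean bound is claimed directly for $\ve_\wst$ or $\ve_\diamond$ beyond what the ordering in Eq.~\eqref{eq:ordering} gives, since $F_\wst$ is not linear in its argument; the worst-case and diamond bounds in the theorem are therefore only as strong as the Choi bound, and improving them would require a genuinely different idea.
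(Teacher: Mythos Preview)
Your proposal is correct and follows essentially the same route as the paper: free component decomposition of $\cN$, linearity of the superchannel, Lemma~\ref{lem:linearity} for the Choi (and average) fidelity against a unitary, and then the ordering Eq.~\eqref{eq:ordering} to pass to $\ve_\wst$ and $\ve_\diamond$. The only cosmetic difference is that the paper runs the argument simultaneously for $x\in\{\cho,\ave\}$ via Lemma~\ref{lem:linearity}, whereas you prove the Choi case and then transfer to the average case via $1-f^\ave_\cU=\frac{d}{d+1}(1-f^\cho_\cU)$; these are equivalent.
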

\begin{proof}
The proof is analogous to that of Theorem~\ref{thm: nogo deterministic state transformation}.
By the definition of $\lfc_{\cN}$, there exists free channel $\cM \in \cF$ and channel $\cR$ such that $\cN$ can be decomposed as follows:
\begin{align}
    \cN = \lfc_\cN \cM + (1-\lfc_{\cN}) \cR.
\end{align}
Let $\Pi$ be any free superchannel. By the linearity of superchannels, 
\begin{align}
    \Pi(\cN) = \lfc_\cN \Pi(\cM) + (1-\lfc_{\cN}) \Pi(\cR).  \label{eq: channel decomposition}
\end{align}
 Then for $x\in\{\cho,\ave\}$, it holds that
 \begin{align}
     &\ve_x(\cN\xrightarrow{\Pi}\cU) \nonumber\\ &= 1 - F_{x}(\Pi(\cN),\cU)  \\
     &= 1 - F_{x}(\lfc_\cN \Pi(\cM) + (1-\lfc_{\cN}) \Pi(\cR),\cU) \\
     &= 1 - \lfc_\cN F_{x}\left(\Pi(\cM),\cU\right) - (1-\lfc_\cN) F_{x}\left(\Pi(\cR),\cU\right) \\
     &\geq 1 - \lfc_\cN f^{x}_{\cU} - (1-\lfc_{\cN})\\
     &= \lfc_\cN(1-f^{x}_{\cU}),
 \end{align}
 where the third line follows from the linearity property Lemma~\ref{lem:linearity}, and the inequality follows from the fact that $F_{\cho}(\Pi(\cM),\cU) \leq f^{\cho}_{\cU}$ since $\Pi(\cM)\in\cF$ by the golden rule, and $F_{\cho}\left(\Pi(\cR),\cU\right) \leq 1$.
Then by Eq.~(\ref{eq:ordering}) we obtain Eq.~(\ref{eq:channel bound}), and Eq.~(\ref{eq:channel bound ave}) follows from the relation Eq.~(\ref{eq: average case Choi state case equivalent}) and is essentially the same bound as the last one of Eq.~(\ref{eq:channel bound}) upto a dimension factor.  
\end{proof}
Note that the best bounds we can get for all error measures are in terms of the Choi overlap $f^{\cho}_{\cU}$. A natural question is whether one can directly use $f^{\wst}_{\cU}$ in the bound for $\ve_\wst$, which would improve the bound.  The problem is we do not have a linearity property analogous to Lemma~\ref{lem:linearity} for the worst-case fidelity $F_\wst$, so the third line does not go through.

As long as the target channel $\cU \not\in \cF$, it is clear by definition that $f_\cU^\cho < 1$.
That is, for any channel $\cN$ satisfying the $\lfc_\cN > 0$ condition and any resource unitary channel, all the above error bounds are nontrivial and thus imply a nonzero error.


\subsubsection{Multiple channel uses and adaptive channel simulation}
Now we discuss the scenario where one takes multiple noisy channels as inputs and intends to simulate some unitary channel, which is analogous to the standard task of distilling high-quality resources from many noisy resources in the state setting.  
However, the multiple instance setting represents a very important difference between channels and states. The composition of multiple states has a simple parallel structure represented by tensor products. In contrast, multiple channels can be used sequentially and adaptively, which is not simply described by tensor products and may be more powerful than the parallel scheme. Whether the adaptive scheme can outperform the parallel one is a crucial problem in many research areas about quantum channels, such as channel simulation, discrimination, and estimation (see, e.g.,~Refs.~\cite{Hayashi09:discrimination,PhysRevA.81.032339,pirandola2017fundamental,Pirandola19:discr,Pirandola20:programmable,wang2019resource,fang2020chain,WildeBertaHircheKaur,PhysRevLett.113.250801,PhysRevLett.118.100502,Yuan_2017,ChannelSimulationinQuantumMetrology,AdaptiveestimationanddiscriminationofHolevoWernerchannels,2020arXiv200310559Z,KatariyaWilde}).

First, note that the parallel use of multiple channels $\cN_1,\cdots, \cN_n$ is again represented by tensor product and thus can be simply regarded as a single channel $\cN = \bigotimes_{i=1}^n\cN_i $.  Therefore, the results  above can be directly applied.   In addition to error bounds, using the super-multiplicity property (Proposition~\ref{prop: lfc channel super multiplicity}), we directly bound the cost or overhead of unitary channel simulation, defined by the number of instances of a certain channel needed to simulate some unitary channel, using parallel strategies.
\begin{corollary}[Parallel simulation cost]
Suppose some free superchannel $\Pi$ transforms $n$ instances of noisy channels $\cN$ to target unitary channel $\cU$ with a certain type of error $\ve_x(\cN^{\ox n}\xrightarrow{\Pi}\cU) \leq \epsilon_x, x\in\{\diamond,\wst,\cho,\ave\}$. Then $n$ must satisfy:
\begin{align}
n & \geq \Bigg[\log\frac{1-f^\cho_\cU}{\epsilon_x}\Bigg]\Bigg[\log\frac{1}{\lfc_{\cN}}\Bigg]^{-1},
\end{align}
for any $x\in\{\diamond,\wst,\cho\}$.
The bound on $n$ in terms of the average-case error $\ve_A$ is equivalent to that in terms of the Choi error $\ve_C$.

\label{cor:overhead channel}
\end{corollary}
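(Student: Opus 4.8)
The plan is to regard the $n$ parallel uses of $\cN$ as the single channel $\cN^{\ox n}=\bigotimes_{i=1}^n\cN$, apply the general error bound of Theorem~\ref{thm: nogo choi state case} to it, and then use the super-multiplicity of the channel free component to isolate the dependence on $n$ --- exactly mirroring the argument for Corollary~\ref{cor:overhead} in the state setting.

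Concretely, a parallel protocol that consumes $n$ copies of $\cN$ is by definition a free superchannel $\Pi$ acting on the single channel $\cN^{\ox n}$, so Theorem~\ref{thm: nogo choi state case} applies verbatim with input $\cN^{\ox n}$ and target $\cU$: for $x\in\{\diamond,\wst,\cho\}$ the hypothesis $\ve_x(\cN^{\ox n}\xrightarrow{\Pi}\cU)\leq\epsilon_x$ combined with Eq.~(\ref{eq:channel bound}) gives
\begin{align}
\epsilon_x \;\geq\; \ve_x(\cN^{\ox n}\xrightarrow{\Pi}\cU) \;\geq\; \lfc_{\cN^{\ox n}}\,(1-f^{\cho}_{\cU}).
\end{align}
Next I would invoke the parallel super-multiplicity inequality of Proposition~\ref{prop: lfc channel super multiplicity}, iterated $n-1$ times, to get $\lfc_{\cN^{\ox n}}\geq(\lfc_\cN)^n$, hence $\epsilon_x\geq(\lfc_\cN)^n(1-f^{\cho}_{\cU})$. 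Taking logarithms of both sides --- legitimate since we may assume $0<\lfc_\cN$ (the claimed bound is vacuous otherwise, its right-hand side being $0$ when $\lfc_\cN=0$) and $\lfc_\cN\leq1$ so that $\log(1/\lfc_\cN)\geq0$ --- and rearranging yields precisely $n\geq[\log\frac{1-f^{\cho}_{\cU}}{\epsilon_x}][\log\frac{1}{\lfc_\cN}]^{-1}$. It is worth recording, as already noted right after Theorem~\ref{thm: nogo choi state case}, that even for $x\in\{\diamond,\wst\}$ the bound must be phrased through the Choi overlap $f^{\cho}_{\cU}$ rather than $f^{\wst}_{\cU}$, because $F_\wst$ lacks the linearity property of Lemma~\ref{lem:linearity}.

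For the average-case error I would instead apply Eq.~(\ref{eq:channel bound ave}) to $\cN^{\ox n}$, obtaining $\ve_\ave(\cN^{\ox n}\xrightarrow{\Pi}\cU)\geq\frac{d}{d+1}(\lfc_\cN)^n(1-f^{\cho}_{\cU})$, where $d$ is the fixed input dimension of $\cU$ and hence does not grow with $n$; since moreover $\ve_\ave=\frac{d}{d+1}\ve_\cho$ exactly, a budget $\epsilon_\ave$ on the average-case error is the same constraint as the budget $\epsilon_\cho=\frac{d+1}{d}\epsilon_\ave$ on the Choi error, so the two produce the same lower bound on $n$. I do not foresee a real obstacle: this is a direct corollary of Theorem~\ref{thm: nogo choi state case} and Proposition~\ref{prop: lfc channel super multiplicity}, and the only points that warrant a moment of care are the degenerate cases $\lfc_\cN\in\{0,1\}$ and the bookkeeping that $d$ refers to the target's input dimension.
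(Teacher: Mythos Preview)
Your proposal is correct and follows essentially the same approach as the paper: treat $\cN^{\ox n}$ as a single input channel, apply Theorem~\ref{thm: nogo choi state case}, then invoke the super-multiplicity of Proposition~\ref{prop: lfc channel super multiplicity} and take logarithms, mirroring the proof of Corollary~\ref{cor:overhead} in the state setting. The paper does not spell out the proof of this corollary beyond the surrounding text, but your write-up matches that indicated derivation.
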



Now we consider the adaptive scheme, which represents a more general way to use multiple input channels to simulate an output channel.  Here, the action on input channels $\cN_1,\cdots, \cN_n$ is represented by a ``quantum comb'' \cite{Chiribella08:comb,Chiribella09:comb2} $\Pi$ with appropriate dimensions realized by channels $\cP_1,\cdots, \cP_{n+1}$, and the input channels are inserted into the slots (as depicted in Fig.~\ref{fig:comb}). 
In resource theory contexts, there is again a golden rule on the combs that a free comb must map free channels to a free channel, that is, if one inserts free  channels  $\cN_1,\cdots, \cN_n\in\cF$ in the slots of comb $\Pi_n$ then the overall channel $\Pi_n(\cN_{[n]})\in\cF$ (where $\cN_{[n]}$ is short for the channel collection $[\cN_1,\cdots, \cN_n]$). Note that, in the case where the comb is realized by free channels $\cP_1,\cdots, \cP_{n+1}\in\cF$ (and the identities on the ancilla systems are considered free), it obviously obey  the  golden rule, because axiomatically the composition of free  channels is free. However, the converse is not necessarily true, that is, the notion of free combs is more general than  free realization.
\begin{figure}[]
\centering
\includegraphics[]{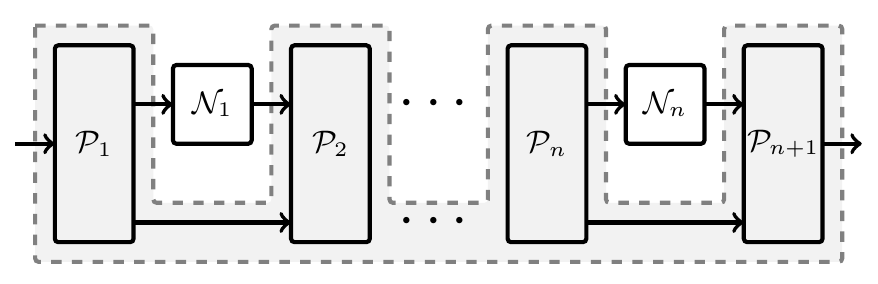}


\caption{Quantum comb. Given input channels $\cN_1,\cdots, \cN_n$, the general map that outputs a channel can be represented by a quantum comb (gray area) realized by channels $\cP_1,\cdots, \cP_{n+1}$, and the input channels are used by inserting them into the slots.
}\label{fig:comb}
\end{figure}





Note that the channel free component obeys the following monotonicity property under free combs.
\begin{proposition}[Monotonicity]\label{lem: lfc monotone under superchannel}
Given any channels $\cN_1,\cdots,\cN_n$ (collectively denoted by $\cN_{[n]}$), it holds that, for any free comb $\Pi_n$ acting on $\cN_{[n]}$, 
\begin{align}
\lfc_{\Pi_n(\cN_{[n]})} \geq \prod_{i=1}^n\lfc_{\cN_i}.    
\end{align}
\end{proposition}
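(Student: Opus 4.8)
The plan is to mirror the proofs of Proposition~\ref{lem: lfc channel monotonicity} and Proposition~\ref{prop: lfc channel super multiplicity}, relying on two structural facts about a quantum comb $\Pi_n$: it is linear in the channel inserted in each of its $n$ slots, and inserting completely positive maps into its slots produces a completely positive map. First I would, for each $i$, invoke the definition of the channel free component to pick a free channel $\cM_i\in\cF$ with $\cN_i-\lfc_{\cN_i}\cM_i\geq 0$. Set $\gamma:=\prod_{i=1}^n\lfc_{\cN_i}$; if some $\lfc_{\cN_i}=0$ the claimed bound is vacuous, so assume $\gamma>0$. The goal is the operator inequality
\begin{align}
\Pi_n(\cN_{[n]})\ \geq\ \gamma\,\Pi_n(\cM_{[n]}),
\end{align}
with $\cM_{[n]}=[\cM_1,\dots,\cM_n]$; combined with $\Pi_n(\cM_{[n]})\in\cF$ (the golden rule for free combs, since all $\cM_i\in\cF$), this yields $\lfc_{\Pi_n(\cN_{[n]})}\geq\gamma$ directly from the definition of the free component.

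To establish the operator inequality I would write $\cB_i:=\lfc_{\cN_i}\cM_i$, so that both $\cN_i-\cB_i\geq 0$ and $\cB_i\geq 0$ are CP for every $i$, and telescope over the slots using multilinearity of $\Pi_n$:
\begin{align}
\Pi_n(\cN_{[n]})-\Pi_n(\cB_1,\dots,\cB_n)
=\sum_{k=1}^{n}\Pi_n\big(\cB_1,\dots,\cB_{k-1},\,\cN_k-\cB_k,\,\cN_{k+1},\dots,\cN_n\big).
\end{align}
Every summand is a comb evaluated on CP maps in all slots, hence is CP and in particular $\geq 0$, so $\Pi_n(\cN_{[n]})\geq\Pi_n(\cB_1,\dots,\cB_n)$. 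Pulling the nonnegative scalars $\lfc_{\cN_i}$ out of the slots one at a time (again by multilinearity) gives $\Pi_n(\cB_1,\dots,\cB_n)=\gamma\,\Pi_n(\cM_{[n]})$, which is exactly the inequality wanted.

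The step that needs care — and the only real obstacle I anticipate — is the assertion that a comb maps CP inputs to a CP output even when the inserted maps $\cN_k-\cB_k$ are CP but not trace preserving. For a comb given by a free realization this is immediate, as the overall map is a composition/tensor product of fixed CP maps with the inserted ones; for general free combs (which the paper allows as a strictly larger class than free realizations) one should instead appeal to the link-product/Choi characterization of combs, under which positive semidefiniteness of the relevant Choi operators is preserved by the link product. Everything else is the same linear-decomposition argument already used in Theorem~\ref{thm: nogo deterministic state transformation} and Proposition~\ref{prop: lfc channel super multiplicity}, so no further difficulties are expected; an alternative route by induction on $n$ is conceivable but is complicated by the fact that freeness of a comb is a joint condition on all its slots, so the direct telescoping seems cleanest.
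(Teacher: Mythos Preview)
Your proof is correct and follows essentially the same route as the paper: both establish $\Pi_n(\cN_{[n]})\geq\gamma\,\Pi_n(\cM_{[n]})$ via order-preservation under tensorization and composition (your telescoping sum makes this explicit, the paper simply asserts it), then invoke the golden rule. Your caveat about CP-preservation for ``general free combs'' is unnecessary: every quantum comb admits a realization by some (not necessarily free) channels $\cP_i$, and the paper uses exactly this realization-based argument, so the link-product detour is not needed.
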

\begin{proof}
Suppose the quantum comb $\Pi_n$ is realized by channels $\cP_i$ with $i=1,\cdots,n+1$, as depicted in Fig.~\ref{fig:comb}.
We emphasize that $\cP_i$ are not necessarily free channels themselves; the only requirement here is that the whole comb obeys the golden rule, i.e.~$\Pi_n(\cM_{[n]})\in\cF$ as long as $\cM_{[n]} = [\cM_1,\cdots,\cM_n]$ and $\cM_i \in \cF$ for $i \in [n]$.
Suppose that the maximization in $\lfc_{\cN_i}$ is achieved by $\cM_i \in \cF$, that is, $\cN_i \geq \lfc_{\cN_i} \cM_i, \forall i$. Then we have 
\begin{align}
\Pi_n(\cN_{[n]}) & = \cP_{n+1}\prod_{i=1}^n (\cN_i \ox \cI_i)\circ \cP_i\\
& \geq\cP_{n+1}\prod_{i=1}^n \lfc_{\cN_i} (\cM_i \ox \cI_i)\circ \cP_i \\
& = \prod_{i=1}^n\lfc_{\cN_i} \Pi_n(\cM_{[n]}), \label{eq:comb order}
\end{align}
where the inequality follows from the fact that channel tensorizations and compositions preserve the channel order $\cN_i \geq \lfc_{\cN_i} \cM_i$. Since $\Pi_n(\cM_{[n]}) \in \cF$ by the golden rule, we get $\lfc_{\Pi_n(\cN_{[n]})} \geq \prod_{i=1}^n\lfc_{\cN_i}$ by definition.
\end{proof}


Now for input channels $\cN_{[n]} = [\cN_1,\cdots,\cN_n]$, comb $\Pi_n$ and unitary target channel $\cU$, the simulation error is defined as
\begin{align}
    \ve_x(\cN_{[n]}\xrightarrow{\Pi_n}\cU) := \ve_x(\Pi_n(\cN_{[n]}), \cU),
\end{align}
for $x \in \{\diamond,\wst,\cho,\ave\}$.

By a little tweak of the proofs above, we establish bounds on the error and cost for adaptive simulation, which match those for the parallel case.
\begin{corollary}[Adaptive simulation error]\label{coro: Adaptive simulation error}
Given any channels $\cN_1,\cdots,\cN_n$ (collectively denoted by $\cN_{[n]}$ and any unitary target channel $\cU$), it holds that, for any free comb $\Pi_n$ acting on $\cN_{[n]}$,
\begin{align}
\ve_\diamond(\cN_{[n]}\xrightarrow{\Pi_n}\cU)& \geq \ve_\wst(\cN_{[n]}\xrightarrow{\Pi_n}\cU)\\
& \geq  \ve_\cho(\cN_{[n]}\xrightarrow{\Pi_n}\cU)\geq 
\prod_{i=1}^n\lfc_{\cN_i}(1-f^{\cho}_{\cU}), \notag
\end{align}
and
\begin{align}
\ve_\ave(\cN_{[n]}\xrightarrow{\Pi_n}\cU) & \geq  \prod_{i=1}^n\lfc_{\cN_i}(1-f^{\ave}_{\cU})\notag\\
& =\frac{d}{d+1}\prod_{i=1}^n\lfc_{\cN_i}(1-f_{\cU}^{\cho}),
\end{align}
where $d$ is the dimension of the input system of $\cU$.
\end{corollary}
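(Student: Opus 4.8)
The plan is to reduce the adaptive statement to the single-channel bound of Theorem~\ref{thm: nogo choi state case} by treating the comb output $\Pi_n(\cN_{[n]})$ as one (possibly resourceful) channel, and then to feed in the comb monotonicity of the free component from Proposition~\ref{lem: lfc monotone under superchannel} to replace $\lfc_{\Pi_n(\cN_{[n]})}$ by the product $\prod_{i=1}^n\lfc_{\cN_i}$. Concretely, first I would observe that since a quantum comb maps channels to channels, $\cK := \Pi_n(\cN_{[n]})$ is itself a channel with the same input/output systems as $\cU$, so by definition of the channel free component there are $\cM\in\cF$ and a channel $\cR$ with $\cK = \lfc_\cK\,\cM + (1-\lfc_\cK)\,\cR$. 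Running the same computation as in the proof of Theorem~\ref{thm: nogo choi state case} — the linearity of $F_x$ against a unitary (Lemma~\ref{lem:linearity}), then $F_\cho(\cM,\cU)\le f^\cho_\cU$ and $F_\cho(\cR,\cU)\le 1$ — gives $\ve_\cho(\cK,\cU)\ge \lfc_\cK(1-f^\cho_\cU)$ and, via Eq.~\eqref{eq: average case Choi state case equivalent}, $\ve_\ave(\cK,\cU) = \tfrac{d}{d+1}\ve_\cho(\cK,\cU)\ge \tfrac{d}{d+1}\lfc_\cK(1-f^\cho_\cU)$; the ordering Eq.~\eqref{eq:ordering} then upgrades this to $\ve_\diamond\ge\ve_\wst\ge\ve_\cho$. (Alternatively one can invoke Theorem~\ref{thm: nogo choi state case} verbatim with the transforming superchannel taken to be the identity superchannel, which trivially obeys the golden rule and is therefore free.)

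Second, I would apply Proposition~\ref{lem: lfc monotone under superchannel} to get $\lfc_\cK = \lfc_{\Pi_n(\cN_{[n]})}\ge \prod_{i=1}^n\lfc_{\cN_i}$. Since $1-f^\cho_\cU\ge 0$ (and $<1$ when $\cU\notin\cF$), multiplying this inequality by $1-f^\cho_\cU$ and chaining with the bounds of the first step yields $\ve_\cho(\cN_{[n]}\xrightarrow{\Pi_n}\cU)\ge \prod_{i=1}^n\lfc_{\cN_i}(1-f^\cho_\cU)$, hence also the $\ve_\diamond$ and $\ve_\wst$ inequalities; the average-case version follows again from Eq.~\eqref{eq: average case Choi state case equivalent} with $f^\ave_\cU = \tfrac{d}{d+1}f^\cho_\cU + \tfrac{1}{d+1}$.

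I do not expect a real obstacle here: the only genuinely delicate point is that inserting the $\cN_i$ into the comb slots preserves the completely-positive order $\cN_i\ge\lfc_{\cN_i}\cM_i$ even though the comb's realizing channels $\cP_j$ need not be free, and that has already been handled inside Proposition~\ref{lem: lfc monotone under superchannel} (the order-preservation step, Eq.~\eqref{eq:comb order}). The remaining work is bookkeeping — making sure $f^\cho_\cU$ is taken with respect to the free set on the correct output system and that the dimension $d$ in the average-case relation is that of $\cU$'s input — so the corollary follows simply by assembling Theorem~\ref{thm: nogo choi state case}, Proposition~\ref{lem: lfc monotone under superchannel}, and the error-measure ordering Eq.~\eqref{eq:ordering}.
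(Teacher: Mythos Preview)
Your proposal is correct and essentially the same as the paper's proof: both rely on the comb order-preservation step Eq.~\eqref{eq:comb order} and the linearity argument from Theorem~\ref{thm: nogo choi state case}, the only cosmetic difference being that you first bound $\ve_\cho\ge\lfc_{\Pi_n(\cN_{[n]})}(1-f^\cho_\cU)$ and then invoke Proposition~\ref{lem: lfc monotone under superchannel}, whereas the paper directly writes the (possibly suboptimal) decomposition $\Pi_n(\cN_{[n]}) = \prod_i\lfc_{\cN_i}\,\Pi_n(\cM_{[n]}) + (1-\prod_i\lfc_{\cN_i})\cR$ and reruns the Theorem~\ref{thm: nogo choi state case} computation. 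Your more modular packaging (applying Theorem~\ref{thm: nogo choi state case} with the identity superchannel) is a perfectly valid way to organize the same ingredients.
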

\begin{proof}
Simply note that, according to Eq.~(\ref{eq:comb order}), we have the following decomposition
\begin{align}
\Pi_n(\cN_{[n]}) =  \prod_{i=1}^n\lfc_{\cN_i} \Pi_n(\cM_{[n]}) + \left(1-\prod_{i=1}^n\lfc_{\cN_i}\right)\cR, 
\end{align}
for some channel $\cR$.   
By following the arguments in the proof of Theorem~\ref{thm: nogo choi state case}, one  can establish similar error bounds where $\lfc_\cN$ is replaced by $\prod_{i=1}^n\lfc_{\cN_i}$.
\end{proof}
Therefore, we can establish the same bound on the simulation cost for the adaptive scheme.
\begin{corollary}[Adaptive simulation cost]\label{coro: Adaptive simulation cost}
Suppose some free  comb $\Pi_n$ transforms $n$ instances of noisy channels $\cN$ to target unitary channel $\cU$ with a certain type of error $\ve_x([\cN,\cdots,\cN]\xrightarrow{\Pi_n}\cU) \leq \epsilon_x, x\in\{\diamond,\wst,\cho,\ave\}$. Then $n$ must satisfy
\begin{align}
 n\geq \Bigg[\log\frac{1-f^\cho_\cU}{\epsilon_x}\Bigg]\Bigg[\log\frac{1}{\lfc_{\cN}}\Bigg]^{-1},
\end{align}
for any $x\in\{\diamond,\wst,\cho\}$. 
The bound on $n$ in terms of the average-case error $\ve_A$ is equivalent to that in terms of the Choi error $\ve_C$.
\label{cor:overhead channel adaptive}
\end{corollary}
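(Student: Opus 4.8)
The plan is to reduce this to Corollary~\ref{coro: Adaptive simulation error}, exactly mirroring how Corollary~\ref{cor:overhead} for states (and the parallel channel analogue Corollary~\ref{cor:overhead channel}) were deduced. First I would specialize the adaptive simulation error bound to $n$ identical copies, i.e.\ $\cN_1 = \cdots = \cN_n = \cN$, so that $\prod_{i=1}^n \lfc_{\cN_i} = (\lfc_\cN)^n$. Combining this with the assumed error budget $\ve_x([\cN,\cdots,\cN]\xrightarrow{\Pi_n}\cU)\leq \epsilon_x$, Corollary~\ref{coro: Adaptive simulation error} gives, for every $x\in\{\diamond,\wst,\cho\}$,
\[
\epsilon_x \;\geq\; \ve_x([\cN,\cdots,\cN]\xrightarrow{\Pi_n}\cU) \;\geq\; (\lfc_\cN)^n\bigl(1-f^{\cho}_{\cU}\bigr).
\]

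Next I would take logarithms of the outer inequality $\epsilon_x \geq (\lfc_\cN)^n(1-f^{\cho}_\cU)$ and solve for $n$. Writing it as $(\lfc_\cN)^n \leq \epsilon_x/(1-f^{\cho}_\cU)$ and taking $\log$, one gets $n\log\lfc_\cN \leq \log\bigl(\epsilon_x/(1-f^{\cho}_\cU)\bigr)$, equivalently $n\log(1/\lfc_\cN) \geq \log\bigl((1-f^{\cho}_\cU)/\epsilon_x\bigr)$. Since $\lfc_\cN \leq 1$ we have $\log(1/\lfc_\cN)\geq 0$ (and strictly positive in the nontrivial regime $0<\lfc_\cN<1$, the only case in which the bound is not vacuous), so dividing through yields the claimed
\[
 n\geq \Bigg[\log\frac{1-f^\cho_\cU}{\epsilon_x}\Bigg]\Bigg[\log\frac{1}{\lfc_{\cN}}\Bigg]^{-1}.
\]
For the average-case error I would use the identity $\ve_\ave = \tfrac{d}{d+1}\ve_\cho$ from Eq.~(\ref{eq: average case Choi state case equivalent}): a constraint $\ve_\ave \leq \epsilon_\ave$ is the same as $\ve_\cho \leq \tfrac{d+1}{d}\epsilon_\ave$, and since the bound on $n$ depends on the error only through its logarithm and on $\cU$ only through the Choi overlap $f^\cho_\cU$, the resulting bound on $n$ coincides with the $\cho$ case up to the harmless replacement of $\epsilon_x$ by $\tfrac{d+1}{d}\epsilon_\ave$, which is precisely the content of the final sentence of the statement.

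There is no genuine obstacle here: all the substance sits in Corollary~\ref{coro: Adaptive simulation error}, whose nontrivial input is in turn the comb monotonicity Proposition~\ref{lem: lfc monotone under superchannel} — the fact that the free component multiplies down through insertion into the slots of a comb, using that channel tensoring and composition preserve the completely positive order $\cN_i \geq \lfc_{\cN_i}\cM_i$. The only points requiring a moment's care are the direction of the inequality when dividing by $\log\lfc_\cN<0$, which I handle by phrasing everything via $\log(1/\lfc_\cN)\geq 0$, and flagging the degenerate cases ($\lfc_\cN=0$, giving a vacuous bound; $\cU\in\cF$, i.e.\ $f^\cho_\cU=1$, giving a trivial one) where the statement holds automatically.
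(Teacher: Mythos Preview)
Your proposal is correct and matches the paper's approach exactly: the paper also treats this corollary as an immediate consequence of Corollary~\ref{coro: Adaptive simulation error} by specializing to $\cN_1=\cdots=\cN_n=\cN$ so that $\prod_i\lfc_{\cN_i}=(\lfc_\cN)^n$, and then rearranging the resulting inequality. In fact the paper does not even write out a separate proof, so your account is if anything more detailed.
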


Note that the adaptive strategies may potentially reduce the error or cost of simulation compared to parallel ones, so the adaptive simulation bounds can be regarded stronger.

A general observation is that the simulation cost asymptotically scales at least as $\Omega(\log(1/\epsilon_x))$ as target error $\epsilon_x \rightarrow 0$ even if we allow adaptive usages of the input channels, no  matter which kind of error measure $x$ is chosen. 



\subsubsection{No-purification conditions}

Here, we discuss  the situations  where no-go rules are in place  for channel resource purification, i.e.~no unitary resource channels can be exactly simulated.  For both the cases of single and multiple input channels, the basic statement goes  as follows. 
\begin{corollary}
There is no free superchannel (or comb) that exactly transforms channel $\cN$ (or a collection of channels $\{\cN_i\}$) to any unitary resource channel $\cU\notin\cF$ if $\lfc_\cN > 0$ (or $\lfc_{\cN_i} > 0,\; \forall i$).
\end{corollary}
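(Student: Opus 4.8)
The plan is to obtain this as an immediate consequence of Theorem~\ref{thm: nogo choi state case} (single input channel) and Corollary~\ref{coro: Adaptive simulation error} (collection of input channels / comb), combined with the faithfulness statement of Proposition~\ref{rmk: fidelity faithfulness}. The only genuinely new observation needed is that a strictly positive lower bound on \emph{any one} of the fidelity-based error measures already rules out exact simulation in \emph{every} sense, because by the ordering in Eq.~\eqref{eq:ordering} the Choi error is the smallest of $\ve_\diamond, \ve_\wst, \ve_\cho$ and controls $\ve_\ave$ up to the dimension factor.

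First I would record that, since $\cF$ is closed by assumption and $\cU \notin \cF$, Proposition~\ref{rmk: fidelity faithfulness} gives $f^{\cho}_{\cU} < 1$, so $1 - f^{\cho}_{\cU} > 0$. Next, for the single-channel case, I would take an arbitrary free superchannel $\Pi$ and invoke Theorem~\ref{thm: nogo choi state case}, which yields
\begin{align}
\ve_\cho(\cN \xrightarrow{\Pi} \cU) \geq \lfc_\cN\,(1 - f^{\cho}_{\cU}).
\end{align}
Under the hypothesis $\lfc_\cN > 0$ the right-hand side is strictly positive, so $\ve_\cho(\cN \xrightarrow{\Pi} \cU) > 0$, and by Eq.~\eqref{eq:ordering} the same holds for $\ve_\wst$, $\ve_\diamond$ and $\ve_\ave$. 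Since $\Pi$ was arbitrary, no free superchannel simulates $\cU$ from $\cN$ exactly.

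For the multi-channel / comb case I would repeat the argument verbatim with Corollary~\ref{coro: Adaptive simulation error} in place of Theorem~\ref{thm: nogo choi state case}: for any free comb $\Pi_n$ acting on $\cN_{[n]}$,
\begin{align}
\ve_\cho(\cN_{[n]} \xrightarrow{\Pi_n} \cU) \geq \Big(\textstyle\prod_{i=1}^n \lfc_{\cN_i}\Big)(1 - f^{\cho}_{\cU}),
\end{align}
which is strictly positive whenever $\lfc_{\cN_i} > 0$ for all $i$, again forcing a nonzero error in all four measures. I expect there to be no real obstacle here — the statement is essentially a repackaging of bounds already established; the only thing worth stating carefully is why ``exact'' can be read against any of the channel distances, which is precisely the chain of inequalities Eq.~\eqref{eq:ordering} together with the faithfulness of $f^{\cho}$.
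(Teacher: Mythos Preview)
Your proposal is correct and follows essentially the same approach as the paper: the paper's proof simply notes that $\cF$ closed and $\cU\notin\cF$ give $f_\cU<1$, and then invokes Theorem~\ref{thm: nogo choi state case} to conclude the error is strictly positive. Your version is slightly more explicit in separately citing Corollary~\ref{coro: Adaptive simulation error} for the comb case and in spelling out via Eq.~\eqref{eq:ordering} why all error measures are covered, but this is only an elaboration of the same argument.
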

\begin{proof}
Since $\cF$ is closed by assumption and $\cU \notin \cF$, we have $f_\cU < 1$. Then due to Theorem~\ref{thm: nogo choi state case} the transformation error (in whichever measure) is strictly positive, indicating that the exact transformation is impossible.
\end{proof}
Now similar to Proposition~\ref{prop:no-go conditions}, we give a series of alternative characterizations of the $\lfc>0$ condition for channels, which could be illustrative or useful in certain scenarios:
\begin{proposition}\label{prop:no-go conditions channel}
For any quantum channel $\cN$, the following conditions are equivalent: 
\begin{enumerate}
\item[(a)] Channel free component: $\lfc_\cN>0$;
\item[(b)] State free component:
\begin{enumerate}
    \item [(b1)] Worst case: For any input state $\rho$,  $\lfc_{\rho_\cN}>0$ (defined with respect to the set  of free states $\{\rho_{\cM}: \cM\in\cF\}$);
    \item [(b2)] Choi state:  $\lfc_{\Phi_\cN} > 0$ (defined with respect to the set of free states $\{\Phi_\cM: \cM\in\cF\}$);
\end{enumerate}
\item[(c)] Support:
\begin{enumerate}
    \item [(c1)] Worst case: There exists $\cM \in \cF$ such that, for any $\rho$,  $\mathrm{supp}(\rho_{\cN}) \supseteq \mathrm{supp}(\rho_{\cM})$;
    \item [(c2)] Choi state: There exists $\cM \in \cF$ such that  $\mathrm{supp}(\Phi_\cN) \supseteq \mathrm{supp}(\Phi_\cM)$;
\end{enumerate}
\item[(d)] Resource measure:
\begin{enumerate}
    \item [(d1)] Worst case: 
$ \min_{\cM\in\cF}{D}_{\min}(\cN\|\cM) = 0,$
where $D_{\min}(\cN\|\cM):=\sup_{\rho}{D}_{\min}(\rho_{\cN}\|\rho_{\cM})$ is the channel min-entropy between $\cN$ and $\cM$~\cite{CooneyMosonyiWilde,wang2019resource};
    \item [(d2)] Choi state:  $\min_{\cM\in\cF}{D}_{\min}(\Phi_\cN\|\Phi_\cM) = 0$.
\end{enumerate}
\end{enumerate}
\end{proposition}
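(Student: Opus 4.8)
The plan is to prove Proposition~\ref{prop:no-go conditions channel} by reducing the channel-level equivalences to the state-level result Proposition~\ref{prop:no-go conditions}, which is already established. The key observation is the identity $\lfc_\cN = \lfc_{\Phi_\cN}$ already noted right after the definition of the channel free component, together with the fact that the Choi representation is a linear bijection between channels and (suitably normalized) states. So first I would dispatch the ``Choi state'' branches (b2), (c2), (d2): these are literally Proposition~\ref{prop:no-go conditions} applied to the state $\rho = \Phi_\cN$ with the free set taken to be $\{\Phi_\cM : \cM \in \cF\}$ (which is closed because $\cF$ is closed and the Choi map is continuous). Indeed (a)$\iff$(b2) is exactly $\lfc_\cN = \lfc_{\Phi_\cN}$; (b2)$\iff$(c2)$\iff$(d2) is then immediate from the (a)$\iff$(b)$\iff$(c) chain of Proposition~\ref{prop:no-go conditions}.

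Next I would handle the ``worst-case'' branches (b1), (c1), (d1). The natural strategy is to show each is equivalent to the corresponding Choi-state condition. One direction is easy: since $\Phi_\cN = \rho_\cN$ for the particular input $\rho = \Phi_{AR}$, any ``for all input states'' statement immediately implies the Choi-state version. For the converse I would use the fact that $\cN \geq \gamma\cM$ (completely positive difference) is equivalent to $\Phi_\cN \geq \gamma\Phi_\cM$, which is also noted in the text; from $\cN - \lfc_\cN\cM \geq 0$ being a completely positive map it follows by applying $\cN_{A\to B}\otimes\id_R$ to an arbitrary $\rho_{AR} \geq 0$ that $\rho_\cN - \lfc_\cN\,\rho_\cM \geq 0$, hence $\lfc_{\rho_\cN} \geq \lfc_\cN > 0$ for \emph{every} $\rho$, witnessed by the \emph{same} free channel $\cM$. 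That gives (a)$\Rightarrow$(b1) with a uniform witness, and the same uniform witness $\cM$ then yields (c1) via the support characterization of Proposition~\ref{prop:no-go conditions}(b) applied state-by-state, and (d1) via part (c) of that proposition together with the definition $D_{\min}(\cN\|\cM) = \sup_\rho D_{\min}(\rho_\cN\|\rho_\cM)$: if $\mathrm{supp}(\rho_\cN)\supseteq\mathrm{supp}(\rho_\cM)$ for all $\rho$ then $D_{\min}(\rho_\cN\|\rho_\cM) = 0$ for all $\rho$ (each term is $-\log\tr\Pi_{\rho_\cN}\rho_\cM = 0$), so the sup is $0$. Conversely, if $\min_{\cM}D_{\min}(\cN\|\cM)=0$ then some $\cM$ achieves $\sup_\rho D_{\min}(\rho_\cN\|\rho_\cM)=0$; taking $\rho = \Phi_{AR}$ gives $D_{\min}(\Phi_\cN\|\Phi_\cM)=0$, which is (d2), and we already know (d2)$\Rightarrow$(a).

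Putting the implications together in a cycle, I would organize the proof as: (a)$\iff$(b2) by the Choi identity; (b2)$\iff$(c2)$\iff$(d2) by Proposition~\ref{prop:no-go conditions}; (a)$\Rightarrow$(b1)$\Rightarrow$(c1)$\Rightarrow$(d1) using the uniform-witness argument and the per-state application of Proposition~\ref{prop:no-go conditions}; and (d1)$\Rightarrow$(d2)$\Rightarrow$(a) by specializing to the maximally entangled input. This closes all equivalences.

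The main obstacle I anticipate is the direction involving the worst-case conditions, specifically making sure the \emph{same} free channel $\cM$ works simultaneously for all input states $\rho$ — the statements (b1), (c1), (d1) are phrased with ``there exists $\cM$ such that for all $\rho$'' (or equivalently so), not ``for all $\rho$ there exists $\cM$,'' and it is the complete positivity of $\cN - \lfc_\cN\cM$ (a single operator-level condition) that delivers this uniformity cleanly. A secondary technical point worth stating carefully is that $D_{\min}(\rho_\cN\|\rho_\cM)$ can be zero even when $\rho_\cN \neq \rho_\cM$ (the min-relative entropy only sees supports), so the channel min-entropy being zero does \emph{not} force $\cN = \cM$; this is consistent with the claim and just needs a one-line remark. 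Everything else is routine bookkeeping on top of Proposition~\ref{prop:no-go conditions} and the elementary facts about Choi states and completely positive order already recorded in the text.
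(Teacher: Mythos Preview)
Your proposal is correct and follows essentially the same approach as the paper: reduce the Choi-state branches (b2), (c2), (d2) to Proposition~\ref{prop:no-go conditions}, and link everything to (a) via the equivalence $\cN \geq \gamma\cM \iff \Phi_\cN \geq \gamma\Phi_\cM \iff \rho_\cN \geq \gamma\rho_\cM$ for all $\rho$. The paper's proof is terser---it simply asserts that (b1), (c1), (d1) are equivalent ``due to the state theory result'' and then invokes the CP-order equivalence---whereas you are more explicit about the uniform-witness issue (the same $\cM$ serving all $\rho$), which is indeed the one place where a direct state-by-state application of Proposition~\ref{prop:no-go conditions} would not suffice; your route through (a) via the CP-order fact is exactly what closes that gap, and it is the same mechanism the paper relies on.
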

\begin{proof}
First it is clear that (b1), (c1), (d1) are equivalent, and (b2), (c2), (d2) are equivalent, due to the state theory result, Proposition~\ref{prop:no-go conditions}. 
The equivalence between (b1), (b2), and (a) follows from the fact that $\cN \geq \gamma\cM$ is equivalent to $\Phi_\cN  \geq \gamma\Phi_\cM$ as well as $\rho_\cN \geq \gamma \rho_\cM$ for all $\rho$. 
\end{proof}
Worth noting, in the channel theory, the counterparts of  min-relative entropy monotones also nicely contrast noisy entities with pure ones.

\subsection{Practical scenarios and applications}\label{sec:channel applications}
The above no-purification rules and bounds are given in general forms so that their range of applicability is as wide as possible.  To provide some concrete understanding and guideline of their practical relevance, we now discuss some specific scenarios and applications of interest.
We shall start with a general discussion on typical noise models and the corresponding no-purification bounds in the contexts of different kinds of channel resource theories, and then specifically consider the roles of no-purification bounds in the contexts of quantum error correction, quantum communication, and circuit synthesis. Note that the main objective of our discussion here is to establish the frameworks for linking the no-purification principles to these practical problems. We shall mostly present general-form bounds, which are expected to be crude for certain specific resource features, noise models, system features etc., leaving refined analyses elsewhere.  

\subsubsection{Channel resource theories and practical noises}\label{sec:noises}


At a high level, we have the following two major different types of channel resource theories, signified by the role of the identity channel. 
\begin{itemize}
    \item {Information preservation theories.}
    In such theories, one is primarily interested in the noise channels and their abilities to simulate noiseless channels so as to preserve or transmit information.   Typical scenarios include quantum error correction and quantum communication. A signature of such theories is that the identity channel (between certain systems) is an ideal resource channel, representing no error or loss of quantum information occurring.   The set of free channels commonly involve e.g.,~certain constant (replacer) channels, which represent complete loss of information.   Here the free channels are in general directly induced by physical restrictions on the implementable operations that, e.g.,~perform the tasks of encoding and decoding.

    \item Resource generation theories. Such theories are commonly based on some resource theory defined at the level of  states (such as entanglement, coherence, magic states). The features of channels and simulation tasks of interest  are related to their ability of generating the state resource.  Here the set of free channels are derived from state theories and thus obey the resource  non-generating property (for  example, the identity channel is axiomatically free).  A typical scenario of this kind is synthesis, where a common task is to simulate, or ``synthesize'' some complicated target channel by elementary resource channels. See further discussions in the next part.
\end{itemize}
In some sense, theories of the first kind are intrinsically based on channels, and those of the second kind are induced by state theories.
Such classification may help elucidate the interplay  between channel and state resource theories.

Now we discuss typical noisy channels of interest in these two different kinds of channel resource theories.

First, consider the first kind, i.e.~information preservation theories, where the identity channel $\id$ is a resource.  Here,  the simulation capabilities (capacities) of noise channels themselves are of interest. A general observation is that, for stochastic noise 
\begin{align}
\cN_\mu = (1-\mu)\id + \mu\cN\;,
\end{align} 
where $\mu\in(0,1)$ is the noise rate, if the noise channel $\cN$ is considered free in the theory in consideration, then $\lfc_{\cN_\mu}\geq \mu$, which can  be directly used to establish bounds on simulation error and cost.   We list a few important noise models that are special cases:   (i) Depolarizing noise: $\cN(\rho) = I/d$ is just a constant channel that outputs the maximally mixed state;  (ii)  Erasure noise: $\cN(\rho) = \ketbra{{\perp}}{{\perp}}$ is also a constant channel that outputs an orthogonal garbage state;  (For these two cases $\cN$ is normally free as it essentially erases information completely.)   (iii)  Dephasing noise:  $\cN = \Delta$ which erases the off-diagonal entries and thus is typically free in quantum scenarios since all coherence-related information is lost;  (iv) Pauli noise: $\cN(\rho) = \sum_i\mu_i P_i\rho P_i$ where $\forall i~\mu_i\geq 0, \sum_i \mu_i = \mu$ and $P_i$'s are non-identity Pauli operators (note that this model encompasses the depolarizing and dephasing noises);  Here $\cN$ is a stabilizer operation, and thus the global Pauli noise has free component in the stabilizer theory, leading to limitations on stabilizer codes.   We shall demonstrate the connections to quantum error correction in more detail in Sec.~\ref{sec:qec}.  
Quantum communication is another important  scenario of this kind, which we shall discuss more specifically in Sec.~\ref{sec:capacities}.  

For the second kind, i.e.~resource generation theories, the input channels of practical interest are usually not the noise channels themselves but the resource-generating channels contaminated by noises. For example, consider $\cN_\mu\circ \cG = (1-\mu)\cG + \mu\cN\circ \cG$ where $\cN_\mu$ is a stochastic noise and $\cG$ is a noiseless resource-generating channel.  Also note that, in contrast to the first kind, the theory is commonly built upon a clear notion of free states. Then a general observation for this case is that if $\cN$ always output a free state, then $\lfc_{\cN_\mu\circ \cG}\geq \mu$. Again, this holds for the depolarizing and erasure noises in normal theories where the maximally mixed state and the garbage state are free (note that the bound can be loose in e.g.~magic theory; see Sec.~\ref{sec:synthesis}).  Then by definition, it also applies to dephasing noise in theories where the diagonal states are free (such as coherence and certain asymmetry theories).  
As mentioned, a particularly important problem in such theories is gate synthesis.
In Sec.~\ref{sec:synthesis}, we shall discuss the implications of our general results to practical synthesis problems in more detail.

Notably, certain communication problems and gate synthesis correspond to adaptive channel simulation, which cannot be understood in the single-channel or parallel simulation schemes.


\subsubsection{Quantum error correction}   \label{sec:qec}

    As a cornerstone of quantum computing and information \cite{nielsen2010quantum}, quantum error correction (QEC) serves to reduce noise effects and errors in physical systems by the idea of encoding the quantum information in a suitable way so that after noise and errors occur the original logical information can be restored (decoded) .    It is clearly important to understand various kinds of limits on QEC.
    Our results here are relevant to the broadly important scenario where the QEC procedures and codes obey certain rules or constraints.  Typical examples include  the well-studied stabilizer codes \cite{Gottesman97:thesis,nielsen2010quantum}, and covariant codes \cite{hayden2017error,faist2019continuous,woods2019continuous,kubica2020using,ZhouLiuJiang20,YangMo20:covariant,KongLiu}, which has recently drawn considerable interest in quantum computing and physics. {In Sec.~\ref{sec:state} we presented general limits on the QEC accuracy based on understanding the decoding as a purification task. 
    Here the channel framework provides an alternate formulation:   Notice that the QEC task is essentially to simulate an identity channel on the logical system; then the channel no-purification bounds induce fundamental limits on this channel simulation task.} As a result, we have the following general bounds on the accuracy and cost of constrained QEC when the system is subject to generic non-unitary noises ($L, S$ denote the logical and physical systems respectively):
\begin{corollary}[Constrained quantum error correction]
Suppose that the encoder and decoder are free channels (subject to certain resource theory constraints) $\Pi$. Then given noise channel $\cN_S$ acting on the physical system $S$, the commonly considered overall error measures for approximate QEC $\ve_x, x\in\{\diamond,\wst,\cho\}$ obey   
\begin{align}
\ve_x(\cN_S \xrightarrow{\Pi} \id_L) \geq {\lfc_{\cN_S}}(1- f_{\id_L}^{\cho}). \label{eq:constrained qec error}
\end{align}
For example, consider the natural independent noise model where the noise channel $\cN$ acts independently and uniformly on each subsystem (e.g.~qubit), i.e.,~the overall noise channel has  the  form $\cN_S = \cN^{\ox n}$. Then 
\begin{align}
\ve_x(\cN_S \xrightarrow{\Pi} \id_L)\geq ({\lfc_\cN})^n(1- f_{\id_L}^{\cho}),
\end{align}
and therefore, to achieve target error $\varepsilon_x({\cN_S}\rightarrow\id_L) \leq \epsilon_x$, the number of physical subsystems $n$ obeys
\begin{align}
n \geq \Bigg[\log\frac{1-f_{\id_L}^{\cho}}{\epsilon_x}\Bigg]\Bigg[\log\frac{1}{\lfc_\cN}\Bigg]^{-1}.
\end{align}
In  the  case  of stochastic noise $\cN = (1-\mu)\id + \mu\cM$ where 
$\cM\in\cF$, $\lfc_\cN$ in the above bounds can be replaced by $\mu$.
\end{corollary}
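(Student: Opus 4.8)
The plan is to recognize the constrained QEC problem as an instance of the unitary channel simulation task already analyzed in Theorem~\ref{thm: nogo choi state case}, with the target unitary being the logical identity channel $\id_L$. First I would make the relevant superchannel explicit: a free encoding channel $\cE_{L\to S}$ together with a free decoding channel $\cD_{S\to L}$ induce the superchannel $\Pi:\cN_S\mapsto \cD_{S\to L}\circ\cN_S\circ\cE_{L\to S}$, which sends a channel on $S$ to a channel on $L$. Since the composition of free channels is free by assumption, $\Pi$ maps free channels to free channels, i.e.~it is a free superchannel in the sense of the golden rule. The logical-level channel actually realized by the code is precisely $\Pi(\cN_S)$, so the recovery error in any of the measures $x\in\{\diamond,\wst,\cho\}$ equals $\ve_x(\cN_S\xrightarrow{\Pi}\id_L)$, and Theorem~\ref{thm: nogo choi state case} applied with $\cU=\id_L$ yields immediately $\ve_x(\cN_S\xrightarrow{\Pi}\id_L)\geq\lfc_{\cN_S}(1-f^{\cho}_{\id_L})$, which is Eq.~(\ref{eq:constrained qec error}).

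For the independent-noise specialization $\cN_S=\cN^{\ox n}$ I would then invoke super-multiplicity of the channel free component under parallel composition (Proposition~\ref{prop: lfc channel super multiplicity}) to obtain $\lfc_{\cN^{\ox n}}\geq(\lfc_\cN)^n$, hence $\ve_x\geq(\lfc_\cN)^n(1-f^{\cho}_{\id_L})$. The cost bound then follows exactly as in Corollary~\ref{cor:overhead channel}: in the nontrivial regime $\epsilon_x<1-f^{\cho}_{\id_L}$, rearranging $\epsilon_x\geq(\lfc_\cN)^n(1-f^{\cho}_{\id_L})$ and taking logarithms gives $n\geq\big[\log\frac{1-f^{\cho}_{\id_L}}{\epsilon_x}\big]\big[\log\frac1{\lfc_\cN}\big]^{-1}$.

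For the stochastic-noise case $\cN=(1-\mu)\id+\mu\cM$ with $\cM\in\cF$, note that $\cN-\mu\cM=(1-\mu)\id\geq0$ with $\cM$ free, so by the definition of the channel free component $\lfc_\cN\geq\mu$. Replacing $\lfc_\cN$ by the smaller value $\mu$ keeps every inequality above valid: the error lower bound only decreases, and in the cost bound $\log(1/\lfc_\cN)\leq\log(1/\mu)$ makes the right-hand side no larger, so $n\geq\big[\log\frac{1-f^{\cho}_{\id_L}}{\epsilon_x}\big]\big[\log\frac1{\mu}\big]^{-1}$ still holds, establishing the last assertion.

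The only step that is more than a transcription of earlier results is checking that the encoder/decoder really do assemble into a \emph{free} superchannel acting on $\cN_S$, so that the golden rule for superchannels applies and Theorem~\ref{thm: nogo choi state case} can legitimately be invoked; once this is granted, the remainder is routine. I would also flag that the bound is informative only when $\cN$ is genuinely noisy in the sense $\lfc_{\cN}>0$ and $\id_L$ is an actual resource channel ($f^{\cho}_{\id_L}<1$, i.e.~the logical identity cannot be simulated for free under the imposed constraints) --- which is precisely the regime in which constrained QEC is a nontrivial task.
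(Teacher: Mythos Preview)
Your proposal is correct and follows exactly the approach the paper intends: the corollary is stated without an explicit proof, being presented as a direct specialization of Theorem~\ref{thm: nogo choi state case} (with $\cU=\id_L$) once one observes that the encoder/decoder pair $(\cE_{L\to S},\cD_{S\to L})$ assembles into a free superchannel, together with Proposition~\ref{prop: lfc channel super multiplicity} for the i.i.d.\ case and the definition of $\lfc$ for the stochastic-noise remark. Your write-up simply makes explicit the routine verifications (freeness of $\Pi$ via the composition axiom, the $\lfc_\cN\geq\mu$ calculation) that the paper leaves to the reader.
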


As previously noted, this general result applies to the important cases of stabilizer and covariant QEC, which, respectively, correspond to Clifford \cite{Gottesman97:thesis} and symmetry \cite{ZhouLiuJiang20} constraints.  
Note again that, in covariant QEC, under the commonly held assumption that the noise channel $\cN_S$ is covariant, we have the stronger conclusion that the error bound Eq.~(\ref{eq:constrained qec error}) holds for any decoder, meaning that covariant codes are no better than $[{\lfc_{\cN_S}}(1- f_{\id_L}^{\cho})]$-correctable for any decoder \cite[Lemma 2]{ZhouLiuJiang20}.
For independent Pauli and erasure noises in the stabilizer case, and depolarizing, dephasing, and erasure noises in the covariant case, $\lfc_\cN$ can be replaced by $\mu$ in the bounds. 

The bounds here are given in the most general forms, {indicating universal limitations on the accuracy and cost of constrained QEC schemes for any noise channel with free component like typical global noise channels, which are naturally important but underinvestigated in the context of QEC.  }   
It would be interesting to  perform more refined analysis of the bounds for specific constraints and noise models, which we leave for future work.

\subsubsection{Quantum communication and Shannon theory} \label{sec:capacities}






The central problem in quantum Shannon theory is to determine the capability of quantum channels to reliably transmit information. Depending on the purpose of transmission (e.g., transmitting classical or quantum information) and the resources that can be used at hand, there are many different variants of channel capacities, each of which corresponds to a channel simulation task in the language of resource theory (see, e.g.,~Refs.~\cite{pirandola2017fundamental,fang2019quantum,LiuWinter19,PhysRevLett.124.120502,wang2019converse,berta2018amortization,fang2019geometric}). 

Here we discuss quantum capacities, which correspond to the task of transforming a given channel to an identity channel between two distinct, distant parties (labs). Note that we need to distinguish the identity channel shared between distant labs from the local identity channel whose input and output systems belong to the same lab. The former is regarded as the ideal resource while the latter is completely free.   In resource theory language, channel capacities are determined by the choice of free superchannels or combs $\Pi$, which correspond to specific coding strategies. Some important cases include the following~\cite{wang2019converse,berta2018amortization,fang2019geometric}:
\begin{itemize}
    \item Unassisted code: superchannel $\Pi$ can be decomposed into an encoder $\cK_{A\to A'}$ by Alice composed with a decoder $\cD_{B\to B'}$ by Bob, i.e., $\Pi = \cD_{B \to B'} \cK_{A \to A'}$;
    \item Entanglement-assisted code: superchannel $\Pi$ acts as $\Pi(\cN)(\rho_A) = \cD_{B\bar B \to B'} \cN_{A'\to B} \cK_{A\bar A \to A'} (\rho_A \ox \omega_{\bar A \bar B})$ with encoder $\cK_{A\bar A \to A'}$, decoder $\cD_{B\bar B \to B'}$ and shared quantum state $\omega_{\bar A \bar B}$;
    \item Non-signalling assisted code: superchannel $\Pi$ is non-signalling from Alice and Bob and vice versa;
    \item  Two-way classical-communication-assisted code: quantum comb $\Pi$ can be realized by local operations and classical communication  (LOCC) operations $\cP_1,\cdots, \cP_{n+1}$ between Alice and Bob (see Fig.~\ref{fig:comb}).
\end{itemize}
Once the free superchannels or combs are set, the set of free channels is then implicitly defined as the channels that can be generated via these superchannels or combs. Note that the first three coding strategies correspond to parallel channel simulation while the last one corresponds to adaptive channel simulation.


The performance of quantum communication can be characterized by an achievable triplet $(n,k,\epsilon)$, meaning that there exists a $\Pi$-assisted coding strategy that uses $n$ instances of the resource channel to transmit $k$ qubits, or simulate $\id_{2^k}$ (identity channel on the system of dimension  $2^k$),  within $\epsilon$ error (here we consider Choi error, which is the standard choice of error measure for quantum communication).
Then by Corollary~\ref{coro: Adaptive simulation error}, we can obtain the following bounds on these parameters  for  general quantum communication in the non-asymptotic regime.  
\begin{corollary}[Quantum communication]\label{coro: capacity}
Suppose $(n,k,\epsilon_\cho)$ is an achievable quantum communication triplet by noise channel $\cN$ with an $\Pi$-assisted code. Then the Choi error $\epsilon_\cho$ obeys
\begin{align}\label{eq: channel capacity tmp1}
\epsilon_{\cho} \geq (\lfc_{\cN})^n(1-f^{\cho}_{\id_{2^k}}).
\end{align} 
In other words, the minimum number of channel uses required to enable reliable transmission of $k$ qubits within Choi error $\epsilon_\cho$ must satisfy
\begin{align}
n \geq \Bigg[\log\frac{1-f_{\id_{2^k}}^{\cho}}{\epsilon_\cho}\Bigg]\Bigg[\log\frac{1}{\lfc_\cN}\Bigg]^{-1}.
\end{align}
\end{corollary}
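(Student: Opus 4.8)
The plan is to reduce the statement to the adaptive simulation bound already established in Corollary~\ref{coro: Adaptive simulation error}. First I would unpack the meaning of an achievable triplet: by hypothesis there is a $\Pi$-assisted coding strategy that uses $n$ instances of the resource channel $\cN$ to simulate the identity channel $\id_{2^k}$ within Choi error $\epsilon_\cho$. For each of the listed code classes (unassisted, entanglement-assisted, non-signalling-assisted, and two-way-LOCC-assisted) this strategy is, by construction, a free comb $\Pi_n$ in the induced channel resource theory---recall that the free channels there are \emph{defined} to be exactly those generatable by such combs, so the golden rule holds automatically. The first three are parallel schemes, which are the special case of an adaptive comb whose internal channels merely store outputs and feed fixed states into the successive slots (so $\Pi_n$ acts on $\cN^{\ox n}$), while the two-way-assisted case is genuinely adaptive via LOCC; in every case Corollary~\ref{coro: Adaptive simulation error} applies with $\Pi_n$ the relevant free comb.

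Next I would instantiate Corollary~\ref{coro: Adaptive simulation error} with $\cN_i = \cN$ for all $i$ and target $\cU = \id_{2^k}$, which immediately yields
\begin{align}
\epsilon_\cho \;\geq\; \ve_\cho(\cN_{[n]}\xrightarrow{\Pi_n}\id_{2^k}) \;\geq\; \prod_{i=1}^n\lfc_{\cN}\,(1-f^{\cho}_{\id_{2^k}}) \;=\; (\lfc_\cN)^n(1-f^{\cho}_{\id_{2^k}}),
\end{align}
which is Eq.~\eqref{eq: channel capacity tmp1}. (Equivalently one could combine the super-multiplicativity of $\lfc$ under parallel and sequential composition from Proposition~\ref{prop: lfc channel super multiplicity} with monotonicity under free combs from Proposition~\ref{lem: lfc monotone under superchannel} and then apply Theorem~\ref{thm: nogo choi state case}, but invoking the adaptive error bound does all of this in one step.)

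Finally, for the ``$n \geq \cdots$'' reformulation I would take logarithms of the rearranged inequality $(\lfc_\cN)^n(1-f^{\cho}_{\id_{2^k}}) \leq \epsilon_\cho$. Since transmitting $k \geq 1$ qubits between distant labs makes $\id_{2^k}$ a genuine resource channel, Proposition~\ref{rmk: fidelity faithfulness} gives $f^{\cho}_{\id_{2^k}} < 1$; likewise the bound is only informative when $\lfc_\cN < 1$, so $\log(1/\lfc_\cN) > 0$ and dividing through is legitimate, giving $n \geq \big[\log\frac{1-f^{\cho}_{\id_{2^k}}}{\epsilon_\cho}\big]\big[\log\frac{1}{\lfc_\cN}\big]^{-1}$.

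The only real subtlety---hardly an obstacle---is the bookkeeping in the first paragraph: verifying that each of the four coding strategies genuinely qualifies as a free comb to which Corollary~\ref{coro: Adaptive simulation error} applies, and in particular that the three parallel strategies are properly subsumed by the adaptive framework. Once that identification is in place, the rest is a one-line application of the adaptive bound followed by elementary algebra.
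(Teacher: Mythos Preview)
Your proposal is correct and matches the paper's approach exactly: the paper states the result as a direct consequence of Corollary~\ref{coro: Adaptive simulation error} without giving a separate proof, and you have simply spelled out the instantiation (taking $\cN_i=\cN$, $\cU=\id_{2^k}$) together with the bookkeeping that each coding strategy defines a free comb and the logarithmic rearrangement. There is nothing to add.
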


We now discuss in more detail the two-way assisted quantum capacity, which is of particular importance due to its close relation to the practical scenario of distributed quantum computing and quantum key distribution. Due to the notorious difficulty of adaptive communication strategies and the involved structure of LOCC operations, this quantum communication scenario is not well understood in spite of its practical importance.
The corresponding asymptotic setting that assumes infinite access to the resource channels was recently investigated by a relaxation of LOCC operations to the mathematically more tractable PPT operations (see, e.g.,~Refs.~\cite{berta2018amortization,fang2019geometric,fawzi2020defining}). In this case, we have the maximum overlap $f_{\id_{2^k}}^\cho  \leq \frac{1}{2^k}$~\cite{rains2001semidefinite}. As the quantum capacity concerns the maximum number of qubits that can be reliably transmitted per use of the channel, we can equivalently obtain from Corollary~\ref{coro: capacity} a nontrivial trade-off (which can be interpreted as a bound on the non-asymptotic two-way assisted quantum capacity):
\begin{align}\label{eq: capacity rate}
\frac{k}{n} \leq -\frac{1}{n} \log \left(1-\frac{\epsilon_\cho}{(\lfc_{\cN})^n}\right) \quad \text{if} \  \epsilon_\cho \leq (\lfc_{\cN})^n.
\end{align}
Also note that, since PPT operations are semidefinite representable, $\lfc_{\cN}$ here can be efficiently computed by 
\begin{align}
\lfc_{\cN} = \max \big\{& \tr W: \,\Phi_{\cN} \geq W,\, W \geq 0,\, \\ & W^{T_B} \geq 0,  \tr_B W_{AB} = (\tr W_{AB}) I_A/|A| \big\},\nonumber
\end{align}
where $\Phi_{\cN}$ is the Choi state of $\cN_{A\to B}$. Fitting this into Eq.~\eqref{eq: capacity rate} can help us do analysis beyond the asymptotic treatment and understand the intricate trade-off between different operational parameters of concern.

\subsubsection{Noisy circuit synthesis}  \label{sec:synthesis}

The problem of approximating some desired transformation by quantum circuits consisted of certain elementary gates, commonly studied under the name  of quantum circuit or gate or unitary synthesis (or sometimes known as ``compiling''), is crucial to the practical implementation  of quantum computation. 
Depending on the practical setting, it is often  the case that some gates are considered particularly costly as compared to other gates, and thus we are mostly interested in the amount of costly gates needed for the desired synthesis task.  
A key observation here is that such synthesis tasks can be formalized as adaptive channel simulation problems, where free gates form a comb and the costly gates are input channels that are inserted into the slots of the comb.  
A particularly important case is ``Clifford+$T$,'' where we would like to decompose the target transformation into Clifford gates, which are assumed to be free since they can be rather easily implemented fault tolerantly, and the ``expensive'' $T$ gates $T = \ketbra{0}{0} + e^{i\pi/4} \ketbra{1}{1}$. Note that the $T$ gates are often implemented by ``state injection'' gadgets \cite{GottesmanChuang99:injection} that make use of $T$ states produced by magic state distillation (studied in Sec.~\ref{sec:state}), which is a resource-intensive procedure.
Therefore, the key figure of merit we would like to optimize is the number of $T$ gates used (namely the ``$T$-count''); see, e.g.,~Refs.~\cite{PhysRevA.87.032332,kliuchnikov2013asymptotically,2013arXiv1308.4134G,Selinger,2015arXiv150404350K,babbush2018encoding,low2018trading} for a host of previous studies related to this problem.  Notably, resource theory is helpful for finding good bounds on the $T$-count in certain cases \cite{HowardCampbell17:magic_rt,bravyi2019simulation}.


{Existing literature on the synthesis problem mostly focuses on the noiseless scenario, where the elementary gates are unitary.  The noisy nature of practical (especially near-term) devices  motivates us to  consider the scenario where certain gates are intrinsically associated with noise and such noisy gates are the elementary components of the circuit for synthesis. For example, a key incentive for the Clifford+$T$ model is that the non-Clifford gates are much harder to protect compared to Clifford gates, so that one may want to consider intrinsically noisy non-Clifford gates  (see below).}  We note that there are fundamental differences between this noisy synthesis setting and the noiseless one, as seen later.  Now, the central question is how many noisy resource gates are needed to approximate a target unitary.
Based on the observation mentioned above which links the synthesis problem to adaptive channel simulation, we establish the following universal lower bounds on such ``noisy gate count'' from Corollary~\ref{coro: Adaptive simulation cost} (note that for synthesis problems we often use the diamond norm error).  
\begin{corollary}[Noisy gate count]
Consider the synthesis task of simulating unitary channel $\cU$ by channel (noisy gate) $\cG$ and arbitrary use of a set of free channels, which compose a free comb, within diamond norm error $\epsilon_\diamond$.  Then the number of instances of $\cG$ needed must satisfy
\begin{align}
n \geq \Bigg[\log\frac{1-f_{\cU}^{\cho}}{\epsilon_\diamond}\Bigg]\Bigg[\log\frac{1}{\lfc_{\cG}}\Bigg]^{-1}.
\end{align}
\label{cor: noisy gate count}
\end{corollary}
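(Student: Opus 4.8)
The plan is to observe that the noisy circuit-synthesis task, once phrased correctly, is nothing but a special case of the adaptive channel simulation scenario already treated, so that Corollary~\ref{cor:overhead channel adaptive} applies essentially verbatim.

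First I would set up the dictionary. The ``arbitrary use of a set of free channels'' that ``compose a free comb'' means precisely that the free gates (together with the identity channels on whatever ancillary registers are carried along) form a quantum comb $\Pi_n$ with $n$ open slots, as in Fig.~\ref{fig:comb}, and that this comb obeys the golden rule: inserting free channels into all $n$ slots produces a free channel. The synthesis protocol is then the act of inserting one copy of the noisy gate $\cG$ into each slot, so that the realized channel is $\Pi_n(\cG,\dots,\cG)$ and the requirement is $\ve_\diamond(\Pi_n(\cG,\dots,\cG),\cU)\le\epsilon_\diamond$. This is exactly the hypothesis of Corollary~\ref{cor:overhead channel adaptive} with all $n$ input channels taken to be $\cG$ and the error measure taken to be the diamond-norm error $x=\diamond$, which is one of the covered cases.

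Second, I would simply invoke that corollary: with $\cN=\cG$ and $\epsilon_x=\epsilon_\diamond$ it yields
\begin{align}
n\geq\Bigg[\log\frac{1-f^{\cho}_{\cU}}{\epsilon_\diamond}\Bigg]\Bigg[\log\frac{1}{\lfc_{\cG}}\Bigg]^{-1},
\end{align}
which is the claim. For self-containedness I would also recall the mechanism underlying that bound: decompose $\cG=\lfc_{\cG}\cM+(1-\lfc_{\cG})\cR$ with $\cM\in\cF$, push this through the multilinear comb to obtain $\Pi_n(\cG,\dots,\cG)=(\lfc_{\cG})^n\Pi_n(\cM,\dots,\cM)+\big(1-(\lfc_{\cG})^n\big)\cR'$ for some channel $\cR'$, where $\Pi_n(\cM,\dots,\cM)\in\cF$ by the golden rule, then lower bound $\ve_\diamond\ge\ve_\cho\ge(\lfc_{\cG})^n(1-f^{\cho}_{\cU})$ using Theorem~\ref{thm: nogo choi state case} together with the comb monotonicity of $\Gamma$ (Proposition~\ref{lem: lfc monotone under superchannel}), and finally solve the resulting inequality for $n$.

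I do not expect a genuine obstacle here; the entire content of the corollary is the conceptual identification of gate synthesis with adaptive channel simulation, after which the quantitative bound is inherited. The only point requiring a moment of care is to confirm that the standing assumptions of the channel theory are met in the synthesis context---that the set of free channels is closed and closed under both parallel and sequential composition, and that the free gates genuinely assemble into a golden-rule-respecting comb rather than merely a free ``realization''---all of which hold under the hypotheses as stated.
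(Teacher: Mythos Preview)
Your proposal is correct and matches the paper's own treatment: the corollary is obtained by recognizing noisy gate synthesis as an instance of adaptive channel simulation and then directly invoking Corollary~\ref{cor:overhead channel adaptive} (equivalently Corollary~\ref{coro: Adaptive simulation cost}). The additional unpacking you provide of the comb decomposition and the chain $\ve_\diamond\ge\ve_\cho\ge(\lfc_{\cG})^n(1-f^{\cho}_{\cU})$ is exactly the mechanism behind that earlier corollary.
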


We now investigate the Clifford+$T$ case specifically, where the $T$ gate is associated with noise, and we are interested in the number of such noisy $T$ gates, or the ``noisy $T$-count''. 
Let $\cC_n = \{\cU_j\}_{j=1}^{N}$ be the $n$-qubit Clifford group consisting of $N$ discrete elements.  Let the set of free channels be the convex hull of $\cC_n$, i.e., $\cF = \conv (\cC_n)$, meaning that we allow mixtures of Clifford gates. Any free channel $\cM \in \cF$ can be written as a convex combination $\cM = \sum_{j=1}^N p_j \cU_j$ with $p_j \geq 0$ and $\sum_{j=1}^N p_j = 1$. For condition $\cN \geq \gamma \cM$, we can replace $q_j = \gamma p_j$ and obtain an equivalent condition $\cM \geq \sum_{j=1}^N q_j \cU_j$ with $q_j \geq 0$ and $\gamma = \sum_{j=1}^N q_j$. Therefore, the free component can be computed by a semidefinite program
\begin{align}\label{eq: SDP for clifford}
\lfc_{\cN} 
= \max \left\{\sum_{j=1}^N q_j: \Phi_\cN \geq \sum_{j=1}^N q_j \Phi_{\cU_j}, q_j \geq 0\right\},
\end{align}
where $\Phi_{\cN}$ and $\Phi_{\cU_j}$ are the Choi states of $\cN$ and $\cU_j$ respectively. 
As a concrete example, consider $T$ gate followed by depolarizing noise $\cN_\mu(\rho) = (1-\mu)\rho + \mu I/2$ as the elementary channel.   Its free component $\lfc_{\cN_\mu\circ T}$ is computed by the SDP Eq.~\eqref{eq: SDP for clifford} where $N = 24$ (see, e.g.,~Ref.~\cite{xia2015randomized} for an explicit enumeration of $\cC_1$), and depicted in Fig.~\ref{fig: noisy t}(a).   When $\mu \geq 1-\sqrt{3}/3 \approx 0.42$ we see that $\lfc_{\cN_\mu\circ T} = 1$, as $\cN_\mu$ compresses the entire Bloch sphere into the stabilizer octahedron so any output is a stabilizer state. Note that this explicit calculation improves the general bound $\mu$ as discussed in Sec.~\ref{sec:noises}.
In Fig.~\ref{fig: noisy t}(b), as an example, we plot the lower bounds on the noisy $T$-count in order to approximate a CCZ gate, obtained from Corollary~\ref{cor: noisy gate count} (where we used $f_{\CCZ}^\wst \leq 9/16$~\cite[Eq.(33)]{bravyi2019simulation}).  


\begin{figure}[]
\centering
\includegraphics[]{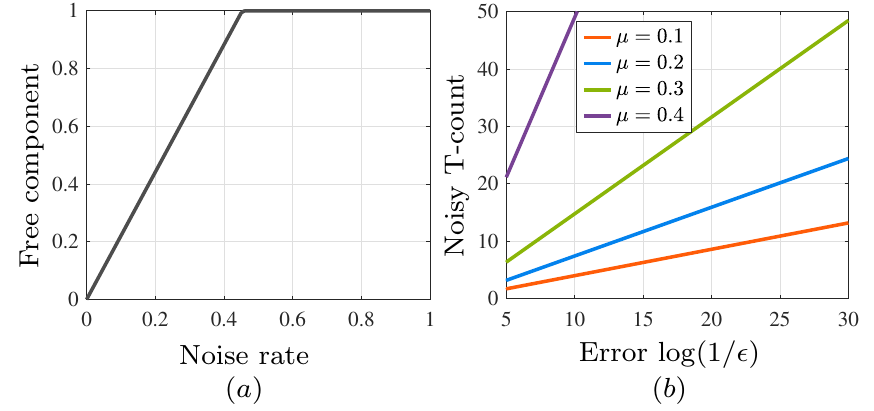}


\caption{Noisy Clifford+$T$ synthesis.  (a) Free component of the noisy $T$ gate $\cN_\mu\circ T$, where $T$ is followed by depolarizing noise of strength $\mu$. (b) Lower bounds on the noisy $T$-count (the number of $\cN_\mu\circ T$) needed to simulate a CCZ gate within diamond norm error $\epsilon$, depicted for noise rates $\mu = 0.1, 0.2, 0.3, 0.4$. 
}
\label{fig: noisy t}
\end{figure}





Recently, Ref.~\cite[Proposition 26]{wang2019quantifyingpublished} also gave an expression for the noisy gate counts in magic theory of odd dimensions using the mana monotone.   Note  that our result applies to any dimension, and is expected to outperform the mana bound especially in the small target error regime.  In particular, our bound implies diverging cost as the target error $\epsilon \rightarrow 0$, which is in line with intuitions, but the mana bound cannot.

Finally, we would like to 
remark that the noisy synthesis results here are  fundamentally different from the existing ones on noiseless synthesis, in spite of some apparent relations.
Most notably, it is known that for any universal gate set, the number of gates needed to approximate all unitaries up to error $\epsilon$ (which can essentially be measured by any channel error measure discussed earlier) scales at least as $\Omega(\log(1/\epsilon))$ \cite{Harrow02:lowerbound} (note that the well-known Solovay--Kitaev theorem \cite{Kitaev_1997,10.5555/2011679.2011685,nielsen2010quantum} concerns the upper bound).     Although the $\Omega(\log(1/\epsilon))$ scaling is similar to our lower bound on noisy gate counts, there are two key differences:  (i) Our noisy synthesis result bounds the number of resource gates needed and says nothing about the number of free gates, while the previous noiseless-case result counts the total number of gates; (ii) Our noisy synthesis result is universal for \emph{any} target resource unitary, while the previous noiseless-case result examines the worst case and there could well be target unitaries with lower or even trivial cost (some target unitaries can be exactly simulated, such as $T$ in Clifford+$T$).  
Relatedly, the geometric covering argument used in Ref.~\cite{Harrow02:lowerbound} is not useful for the noisy case.
In general, the noiseless and noisy synthesis and gate counts are fundamentally disparate problems contingent on different factors.  This can again be seen from  Clifford+$T$, where intricate number theory properties and techniques play decisive roles in the noiseless case \cite{PhysRevA.87.032332,kliuchnikov2013asymptotically,2013arXiv1308.4134G,Selinger,2015arXiv150404350K} while being irrelevant in the noisy case.


\section{Concluding remarks}\label{sec:conclusion}

We introduced a simple, universal framework for understanding and analyzing the limitations on quantum resource purification tasks that applies to virtually any resource theory, based on the notion of ``free component'' of noisy resources.  We developed the theory in detail for both quantum states and channels.   For the state theory, our new results significantly improve over corresponding ones discovered in Ref.~\cite{FangLiu20} in terms of both the regime of the no-purification rules and the quantitative limits.  This  framework also enabled us to  quantitatively understand the no-purification principles for quantum channels or dynamical resources.  Specifically, the channel theory involves complications concerning  error measures and the possibility of adaptively using multiple resource instances, as compared to the state theory.  We demonstrated broad theoretical and practical relevance of our techniques and results by discussing their applications to several key areas of quantum information science and physics.
 The simplicity and generality of our theory highlight the fundamental nature of the no-purification principles.

Several technical problems are worth further study.  First, we considered channel simulation with a single target channel here, but more generally the output can also be  a comb \cite{Chiribella08:comb};  It would be interesting to further study the no-purification bounds  for such cases   and  explore their relevance.  Second, we formulated the results in terms of deterministic one-shot transformation and only left preliminary remarks on the probabilistic case; A comprehensive understanding of the probabilistic case is left for future work.  Third, it is worth further study purification tasks for continuous variables, especially resource (e.g.~non-Gaussianity) distillation tasks and their applications in optical quantum information processing, given that there are some sharp distinctions known concerning the feasibility and behaviors of distillation procedures  \cite{lami18:gaussianrt,PhysRevA.97.062337} between continuous and discrete variables, but the understanding of the full correspondence is still preliminary.

Furthermore, it would be interesting to further analyze our bounds and associated parameters in specific theories and problems.
The discussion on the applications  we gave here mainly serve to establish the general, conceptual connections and are thus preliminary.  Further developments of these connections, taking specific features of the system, resource, and noise etc. into account,  may be fruitful.   In particular, for the extensively studied topics of quantum error correction and quantum Shannon theory, it would be interesting to further optimize the bounds and compare them with  existing results in specific scenarios.  We eventually hope that our demonstrations here will spark explorations of further applications or consequences of the no-purification principles in quantum information and physics.

\smallskip
\emph{Note added.}  After the completion of our paper, we became aware that Regula and Takagi independently considered the resource weight and obtained results related to ours which later developed into Ref.~\cite{RegulaTakagi:channel}. The two papers were arranged to be released concurrently on arXiv.

\begin{acknowledgments}
We thank Newton Cheng, Gilad Gour, Liang Jiang, Seth Lloyd, Milad Marvian, Kyungjoo Noh, and Sisi Zhou for discussions and feedback, Bartosz Regula and Ryuji Takagi for discussions and correspondence about their work \cite{RegulaTakagi:channel},  and especially Andreas Winter for inspiring discussions about the initial ideas.
ZWL is supported by Perimeter Institute for Theoretical Physics.
Research at Perimeter Institute is supported in part by the Government of Canada through the Department of Innovation, Science and Economic Development Canada and by the Province of Ontario through the Ministry of Colleges and Universities. 
\end{acknowledgments}

\bibliography{lfc_channel}

\end{document}